\newtheorem{theorem}{Theorem}[section]
\newtheorem{lemma}[theorem]{Lemma}
\newtheorem{proposition}[theorem]{Proposition}
\newtheorem{defin}[theorem]{Definition}
\newenvironment{proof}{\noindent \textbf{Proof: }}{\hfill
$\Box$  \vspace{1ex}}
\newenvironment{definition}{\begin{defin}\em}{\end{defin}}
\newtheorem{defins}[theorem]{Definitions}
\newtheorem{exs}[theorem]{Examples}
\newtheorem{ex}[theorem]{Example}
\newenvironment{example}{\begin{ex}\em}{\end{ex}}
\newtheorem{rem}[theorem]{Remark}
\newtheorem{rems}[theorem]{Remarks}
\newtheorem{corollary}[theorem]{Corollary}
\def\fq{\mathbb{F}_q}
\def\B{\mathbf{B}}
\def\lm{\textrm{LM}}
\def\C{\mathcal{C}}
\def\fqt{\mathbb{F}_q[\boldsymbol{t}]}
\def\L{\mathcal{L}}
\def\M{\mathcal{M}}
\def\e{\boldsymbol{e}}
\DeclareMathOperator*{\Supp}{Supp}
\def\a{\alpha}
\def\b{\beta}
\def\r{\beta}
\begin{document}

\begin{center}
{\Large\textbf{On the number of minimal and next-to-minimal weight codewords of 
toric 
codes over hypersimplices}}
\end{center}
\vspace{3ex}

\noindent\begin{center} 
\textsc{C\'{\i}cero  Carvalho\footnote[1]{Instituto de Matem\'atica e 
Estat\'{\i}stica - UFU - Brazil -- cicero@ufu.br} and Nupur 
Patanker\footnote[2]{Indian Institute of Science Bangalore - India -- 
nupurp@iisc.ac.in} }
\end{center}
\vspace{3ex}

\vspace{4ex}
\noindent
\textbf{Abstract.}
Toric codes are a type of evaluation code introduced by J.P.\ Hansen in 2000. 
They are produced by evaluating (a vector space composed by) polynomials at the 
points of 
$(\fq^*)^s$,  the monomials of these polynomials being related to a certain 
polytope. Toric codes related to hypersimplices are the result of the 
evaluation of a vector space of homogeneous monomially square-free  polynomials 
of degree 
$d$.
The dimension and minimum distance of toric codes related to hypersimplices 
have been determined by 
Jaramillo et al.\ in 2021. 
The next-to-minimal weight in the case $d = 1$ has been determined by 
Jaramillo-Velez et al.\ in 2023, and has been determined
in the cases where $3 \leq d \leq 
\frac{s - 2}{2}$ or $\frac{s + 2}{2} \leq d < s$, by 
Carvalho and Patanker in 2024.
In this work we characterize and determine the number of minimal (respectively, 
next-to-minimal) weight codewords 
when $3 \leq d < s$ (respectively,  
 $3 \leq d \leq 
\frac{s - 2}{2}$ or $\frac{s + 2}{2} \leq d < s$).
\vspace{3ex}

\noindent
{\small
\textbf{Keywords.} Evaluation codes; toric codes; next-to-minimal weight; 
second least Hamming weight.}

\vspace{1ex}
\noindent
{\small
\textbf{MSC.} 94B05, 11T71, 14G50
    }

\vspace{3ex}
\section{Introduction}

Let $\fq$ be a finite field with $q$ elements. In this work we study the class 
of toric codes, introduced by J.P.\ Hansen in 2000 (see \cite{hansen}). These 
codes may be seen as elements in the class of the so-called evaluation  codes 
(see e.g.\ the Introduction of \cite{little}). We study a special case of toric 
codes, which we describe now.

Let $X := (\mathbb{F}_q^*)^s$, then the ideal of all polynomials in 
$\fqt := \fq[t_1, \ldots, t_s]$ which vanish on all points of $X$ is
  $I_X = (t_1^{q - 1} - 
1,\ldots, t_s^{q - 1} - 1)$. It is not difficult to check that, 
writing $n := | X |$ and $X := \{P_1, \ldots, P_n\}$,  the 
evaluation map
\begin{equation}\label{varphi}
\begin{array}{rcl}
\varphi : \fqt/I_X & \longrightarrow & \fq^n \\
f + I_X & \longmapsto & (f(P_1), \ldots, f(P_n))
\end{array}
\end{equation}
is an isomorphism (see e.g. \cite[Prop. 3.7]{gb-in-coding}).

\begin{definition}
Let $d$ be a positive integer such that $d \leq s$,  
let $\L(d) \subset \fqt/I_X$ be the 
$\fq$-vector subspace generated by 
\[
\{ t_1^{a_1} \cdots t_s^{a_s} + I_X 
\mid a_i \in \{0,1\} \; \forall \; i = 1, \ldots, s \textrm{ and } \sum_i a_i = 
d \}.
\]
The toric 
code $\C(d)$ is the image $\varphi(\L(d))$.
\end{definition}

The connection of the above definition with that of \cite{hansen} is that,
denoting by $\Delta_{s,d}$ the $(s,d)$-hypersimplex in $\mathbb{R}^s$, 
i.e. the convex polytope generated by
the set 
$\{\e_{i_1} + \cdots + \e_{i_d} \mid 1 \leq i_1 < \cdots < i_d \leq s \}$, 
where $\e_i$ denotes the $i$-th vector in the canonical basis for 
$\mathbb{R}^s$, $1 \leq i \leq s$, then 
\[
\Delta_{s,d} \cap \mathbb{Z}^s = \{\e_{i_1} + \cdots + \e_{i_d} \mid 1 \leq i_1 < \cdots < i_d \leq s \}
\]
and the $s$-tuples in $\Delta_{s,d} \cap \mathbb{Z}^s$ correspond to the exponents of the monomials in the generating set
$\L(d)$.

The minimum 
distance of $\C(d)$ was determined in \cite[Thm. 4.5]{evalcodes}, and for $q 
\geq 3$ and $d \geq 1$ is as follows:

\[
\delta(\C(d)) = \left\{ \begin{array}{l} (q - 2)^d (q - 1)^{s - d} \textrm{ if 
} 1 \leq d \leq \frac{s}{2} \; ;\\[0.3cm]
(q - 2)^{s -d} (q - 1)^{d} \textrm{ if } 
\frac{s}{2} < d < s .
                           \end{array} 
                 \right.        
\]

In what follows we will always assume that $q \geq 4$ and $d \geq 3$, as in 
\cite{car-nupur}.
The second least Hamming weight of $\C(d)$,  
also known as next-to-minimal 
weight, is denoted by $\delta_2(\C(d))$ and 
was determined in \cite{jaramillo2023} for $d = 1$, and in \cite{car-nupur} for 
$d$ such that $3 \leq d \leq 
\frac{s - 2}{2}$ or $\frac{s + 2}{2} \leq d < s$, see \cite[Thm. 4.5 and Corol. 
4.6]{car-nupur}:
\[
\delta_2(\C(d)) = \left\{ \begin{array}{l} (q - 2)^d (q - 1)^{s - d} + (q - 2)^d(q - 1)^{s - d - 2}  \textrm{ if 
} 3 \leq d \leq \frac{s - 2}{2} \; ;\\[0.3cm]
(q - 2)^{s - d} (q - 1)^{d} + (q - 2)^{s - d}(q - 1)^{d - 2}  \textrm{ if } 
\frac{s+2}{2} < d < s .
                           \end{array} 
                 \right.        
\]

Let $A_i$ be the number of codewords of $\C(d)$ of weight $i$, for $i = 0,\ldots, n$. The weight enumerator polynomial of 
$\C(d)$ is $W_{\C(d)}(X,Y) = \sum_{i = 0}^n A_i X^{n - i}Y^i$. This polynomial is important to determine the
probability of error in error-detection (see e.g. \cite{torleiv}). Clearly, $A_0 = 1$. In this paper we determine
the number of minimal weight codewords (see Theorem \ref{num-min-words} and 
Corollary 
\ref{corol.2.4}) and also the number of
next-to-minimal weight codewords (see Theorem  \ref{num-ntm-codew} and  
Corollary  \ref{num-ntm-codew2}), 
which are the first two values of $A_i$, with $i > 0$, which are nonzero.   

We also characterize the classes of polynomials in $\L(d)$ whose evaluation produces 
minimum weight codewords (see Theorem \ref{min-word-general}) and those whose evaluation produces
next-to-minimal weight codewords (see Theorem \ref{second-min-word-general}). 
These results are used to count the number of codewords mentioned above,
but also have geometric interpretations. For example, from Theorem \ref{min-word-general} one may deduce that any
hypersurface of degree $d$ in $\mathbb{F}_q^s$, given by a homogeneous polynomial in 
$\fqt$ whose monomials are square-free,
and which intersects the affine torus $(\mathbb{F}_q^*)^s$ 
in the maximal number of points (maximal when considered only hypersurfaces of 
this type) must be a specific hyperplane configuration, as described in 
the statements of Theorem \ref{min-word-general} and Corollary \ref{corol.2.4}. 
A similar statement applies for the second maximal number of points 
in the intersection of the affine torus and hypersurfaces of this type. 
We prove that if $2d + 2 \leq s$ then the second maximal number of points 
is attained only if the hypersurface is a certain hyperplane arrangement
(see Theorem \ref{second-min-word-general}), while if $2 d - 2 \geq s$
then the hypersurface may not be a hyperplane arrangement (see Example 
\ref{last-ex}), a phenomenon which also occurs when we look for the 
next-to-minimal weights of projective Reed-Muller codes (see \cite[Prop.\ 
3.3]{car-neu-2018}).

In this paper we work frequently with polynomials in $\fqt$ whose monomials are 
not multiple of $t_i^2$ for all $i = 1, \ldots, s$. We call these polynomials 
monomially square-free, following \cite{evalcodes} (note that in other works, 
e.g.\ \cite{car-nupur}, they are called square-free 
polynomials).

The paper is organized as follows: the next section presents the results 
related to minimal weight codewords, while the last 
section presents the results related to the next-to-minimal weight codewords.

\section{Characterization and number of minimum weight codewords} \label{sec2}

In \cite{car-nupur} the next-to-minimal weights of 
$\C(d)$ were determined, using techniques involving results from Gr\"{o}bner 
basis theory,  for the cases when $3 \leq d \leq \frac{s - 2}{2}$ or 
$\frac{s+2}{2} \leq d < s$. To do that, given 
a homogeneous monomially square-free polynomial $f \in \fqt$ of degree $d$, we 
assumed, after a relabeling of the variables, and after choosing the 
graded-lexicographic order $\prec$ in $\fqt$ with $t_s \prec \cdots \prec t_1$, 
 that the leading monomial of $f$ is $\lm(f) = t_1. \cdots . 
t_d$, and we determined the two lowest possible values for the weight of $f$, 
the lowest being, of course, the minimum distance, already determined in 
\cite{evalcodes}. 
Among other results, we proved the following.

\begin{proposition}\cite[Prop.\ 3.3]{car-nupur} \label{min-word}
Let $f \in \fqt$ be a homogeneous, monic, monomially square-free   
polynomial of degree $d$, such that $\lm(f) = t_1 . \cdots. t_d$, and assume 
that 
$2d \leq s$. 
Then $\varphi(f + I_X)$ is a minimum weight codeword if and only if $f = (t_1 + 
\a_{1} t_{c_1}). \cdots . (t_d + \a_{d} t_{c_d})$, with $c_1, \ldots, c_d \in 
\{d+1, \ldots, s\}$ and  $\a_{1}, \ldots, \a_{d} \in \fq^*$.
\end{proposition}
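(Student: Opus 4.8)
The plan is to convert everything into a count of the points of $X=(\fq^*)^s$ at which $f$ does not vanish: this number is the Hamming weight of $\varphi(f+I_X)$, and since $\delta(\C(d))=(q-2)^d(q-1)^{s-d}$ when $2d\le s$, the codeword $\varphi(f+I_X)$ has minimum weight if and only if $f$ is nonzero at exactly $(q-2)^d(q-1)^{s-d}$ points of $X$; write $w(f)$ for that number. The implication ``$\Leftarrow$'' is then a direct peeling argument: if $f=(t_1+\a_1t_{c_1})\cdots(t_d+\a_dt_{c_d})$ with $c_i\in\{d+1,\dots,s\}$ and $\a_i\in\fq^*$, a point $P=(p_1,\dots,p_s)\in X$ satisfies $f(P)\ne0$ exactly when $p_i\ne-\a_ip_{c_i}$ for every $i\le d$; choosing $p_{d+1},\dots,p_s\in\fq^*$ freely ($(q-1)^{s-d}$ ways), each $p_i$ with $i\le d$ must then avoid a single nonzero value, leaving $q-2$ choices, independently for each $i$, so $w(f)=(q-2)^d(q-1)^{s-d}$.

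For ``$\Rightarrow$'' I would induct on $d$. The case $d=1$ is immediate: a monic linear form $t_1+\sum_{j\ge 2}\b_jt_j$ attains $w(f)=(q-2)(q-1)^{s-1}$ if and only if $\sum_{j\ge 2}\b_jt_j$ has no zero in $(\fq^*)^{s-1}$, i.e.\ if and only if exactly one $\b_j$ is nonzero. For the inductive step, write $f=t_1g+h$ with $g,h\in\fq[t_2,\dots,t_s]$; then $g$ is monic, monomially square-free, homogeneous of degree $d-1$, $\lm(g)=t_2\cdots t_d$, and $2(d-1)\le s-1$, so the induction hypothesis applies to $g$. Partitioning $T':=(\fq^*)^{s-1}$ according to the vanishing of $g$ and of $h$, and counting for each $Q\in T'$ the values $p_1\in\fq^*$ with $p_1g(Q)+h(Q)\ne0$, gives
\[
w(f)=(q-2)\,w(g)+e+(q-1)\,c,
\]
where $e=\#\{Q\in T':g(Q)\ne0=h(Q)\}$ and $c=\#\{Q\in T':g(Q)=0\ne h(Q)\}$. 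By the induction hypothesis $w(g)\ge(q-2)^{d-1}(q-1)^{s-d}$, whence $w(f)\ge(q-2)^d(q-1)^{s-d}$, and equality forces $w(g)=(q-2)^{d-1}(q-1)^{s-d}$ together with $c=e=0$. The first gives, again by the induction hypothesis, $g=\prod_{i=2}^d(t_i+\a_it_{c_i})$ with $c_i\in\{d+1,\dots,s\}$ and $\a_i\in\fq^*$ (the $c_i$ distinct, since $g$ is monomially square-free); and $c=e=0$ means that $g$ and $h$ have exactly the same zeros in $T'$.

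The remaining step, recovering $h$, is the one I expect to be the main obstacle, since passing from ``same zeros on the torus'' to a polynomial identity over $\fq$ is not automatic. I would first show $g\mid h$ in $\fq[t_2,\dots,t_s]$: for each $i\in\{2,\dots,d\}$, substituting $t_i=-\a_it_{c_i}$ in $h$ produces a polynomial $\tilde h$ in the remaining $s-2$ variables that vanishes on all of $(\fq^*)^{s-2}$; because $h$ is monomially square-free, every variable occurs in $\tilde h$ to a power at most $2$, and since $q\ge4$ we have $2\le q-2$, so $\tilde h$ is already in reduced form modulo the vanishing ideal of $(\fq^*)^{s-2}$ and is therefore $0$, i.e.\ $(t_i+\a_it_{c_i})\mid h$; the factors $t_i+\a_it_{c_i}$ being pairwise non-associate primes, $g\mid h$. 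Writing $h=g\,r$ with $r$ a linear form in $t_2,\dots,t_s$ yields $f=g(t_1+r)$.

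Finally I would pin down $r$. A coefficient computation shows that if $r$ involved any variable appearing in $g$, then $f$ would contain a monomial divisible by $t_j^2$ with nonzero coefficient (the top such monomial is not cancelled), contradicting that $f$ is monomially square-free; so the set of variables of $r$ is disjoint from that of $g$. But then, if $r$ had two or more nonzero terms, one could choose $Q\in T'$ with $g(Q)\ne0$ and $r(Q)=0$ (setting the variables of $r$ to solve $r(Q)=0$ and the variables of $g$ so that $g(Q)\ne0$, independently), and the fibre count above would give $w(f)>(q-2)^d(q-1)^{s-d}$, a contradiction. Hence $r=\a_1t_{c_1}$ with $c_1\in\{d+1,\dots,s\}$ and $\a_1\in\fq^*$, so $f=(t_1+\a_1t_{c_1})g=\prod_{i=1}^d(t_i+\a_it_{c_i})$, as claimed.
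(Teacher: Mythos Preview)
Your argument is correct. Note, however, that the paper does not itself prove this proposition; it is quoted from \cite{car-nupur}, where the proof proceeds via the Gr\"obner basis machinery summarised in Section~3 of the present paper: the weight of $\varphi(f+I_X)$ equals the minimum distance if and only if $\{t_1^{q-1}-1,\dots,t_s^{q-1}-1,f\}$ is already a Gr\"obner basis for $I_X+(f)$, i.e.\ if and only if every $S$-polynomial $S(t_j^{q-1}-1,f)$ reduces to zero; and \cite[Corol.~3.2]{car-nupur} shows that this forces $(t_j+\gamma_jt_{e_j})\mid f$ for each $j\le d$, yielding the factorisation.

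Your route is genuinely different: an elementary induction on $d$ via the fibre decomposition $f=t_1g+h$ and the identity $w(f)=(q-2)\,w(g)+e+(q-1)c$, with no appeal to Gr\"obner bases. The divisibility step --- substituting $t_i=-\a_it_{c_i}$ and observing that the resulting polynomial has every exponent at most $2\le q-2$, hence lies in the footprint of $(\fq^*)^{s-2}$ --- is exactly where $q\ge4$ enters, and it replaces the $S$-polynomial analysis neatly. The trade-off is that the Gr\"obner approach extends directly to the next-to-minimal weight (by analysing the case where precisely one remainder is nonzero, as in the proof of Theorem~\ref{second-min-word-general}), whereas your identity would require a finer analysis of $e$ and $c$ to reach the second weight. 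Two small remarks: the bound $w(g)\ge(q-2)^{d-1}(q-1)^{s-d}$ is the minimum-distance formula from \cite{evalcodes} rather than the induction hypothesis proper; and the claim that $r$ cannot involve a variable of $g$ deserves one explicit sentence (if $j\in\{2,\dots,d\}$ the coefficient of $t_2\cdots t_j^2\cdots t_d$ in $rg$ is exactly $\beta_j$, and if $j=c_i$ the coefficient of $t_{c_i}^2\,t_2\cdots\hat t_i\cdots t_d$ is $\a_i\beta_{c_i}$, with no cancellation in either case).
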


Now we want to describe all possible 
homogeneous monomially square-free   
polynomials $f$ of degree $d$ such that $\varphi(f + I_X)$ is a minimum weight 
codeword, and for that we examine more closely the relabeling of variables 
mentioned above.

We start with a (nonzero) homogeneous monomially square-free   
polynomial $f$ of degree $d$ in $\fqt$, not endowed with a monomial order, at 
the moment. Let $t_{i_1} . \cdots . t_{i_d}$ be a monomial of $f$. Let $\sigma$ 
be a permutation of $\{1, \ldots, s\}$ such that $\sigma(i_\ell) = \ell$ for 
$\ell = 1, 
\ldots, d$. We will also denote by $\sigma$ the isomorphism $\sigma: \fqt 
\rightarrow \fqt$ defined by $\sigma(\sum \a_M M ) = \sum \a_M \sigma(M)$, 
where 
$\a_M \in \fq$ and if $M = t_{j_1} . \cdots . t_{j_d}$ then $\sigma(M) = 
t_{\sigma(j_1)} . \cdots . t_{\sigma(j_d)}$. Note that now $t_1. \cdots . t_d$ 
is a monomial of $\sigma(f)$ and that this isomorphism is also 
an isomorphism when restricted to  
the $\fq$-vector space $S_d$ formed by 
homogeneous monomially square-free polynomials of degree $d$, together with the 
zero 
polynomial. 
 Also, for all  $P = (\r_1, \ldots, 
\r_s) 
\in (\fq^*)^s$ we define $\sigma(P) = (\r_{\sigma(1)}, \ldots, 
\r_{\sigma(s)})$. 
Recalling that $X = \{P_1, \ldots, P_n\}$, 
one may easily check that for any monomial $M \in \fqt$ we get  
\[
(M(P_1), \ldots, M(P_n)) = (\sigma(M)(\sigma^{-1}(P_1)),\ldots, 
\sigma(M)(\sigma^{-1}(P_n)) ).
\]
Hence 
\begin{equation} \label{eq-f}
(g(P_1), \ldots, g(P_n)) = (\sigma(g)(\sigma^{-1}(P_1)),\ldots, 
\sigma(g)(\sigma^{-1}(P_n)) ).
\end{equation}
for all $g \in S_d$. 
A consequence of this is that the code $\C(d)$, which is obtained by evaluating 
all  
$g \in S_d$ at the sequence of points $(P_1, \ldots, P_n)$ is the same code we 
obtain when we evaluate all $\sigma(g) \in S_d$, with $g \in S_d$, at the 
sequence 
of points 
$(\sigma^{-1}(P_1), \ldots, \sigma^{-1}(P_n))$. Thus, the code 
$\widetilde{C}(d)$ obtained
by evaluating all  $\sigma(g) \in S_d$, with $g \in S_d$, at the sequence of 
points $(P_1, \ldots, P_n)$ is monomially equivalent to $\C(d)$. 

Because of equation \eqref{eq-f}, if we want to study the weight of $\varphi(f 
+ I_X)$ we may, equivalently, study the weight of $\varphi(\sigma(f) + I_X) \in 
\widetilde{C}(d)$. Now we endow $\fqt$ with the graded lexicographic order 
where $t_s \prec \cdots \prec t_1$, so that $\lm(\sigma(f)) = t_1. \cdots . 
t_d$.  We also may assume that $\sigma(f)$ is monic. In the paper 
\cite{car-nupur}, instead of working with $\sigma(f)$ and the code 
$\widetilde{C}(d)$, we worked with $\C(d)$ and wrote that, after a relabeling 
of 
the 
variables, we
may assume that $\lm(f) = t_1. \cdots . t_d$, and then proved 
Proposition \ref{min-word}. After the above considerations, we see that this 
proposition states that if $\sigma(f)$ is a homogeneous, monic, monomially 
square-free   
polynomial of degree $d$, such that $\lm(\sigma(f)) = t_1 . \cdots. t_d$, then 
$\varphi(\sigma(f) + I_X)$ is a minimum weight codeword of $\widetilde{C}(d)$ 
if 
and only if  
$\sigma(f) = (t_1 + 
\a_{1} t_{c_1}). \cdots . (t_d + \a_{d} t_{c_d})$, with $c_1, \ldots, c_d \in 
\{d+1, \ldots, s\}$ and  $\a_{1}, \ldots, \a_{d} \in \fq^*$.

\begin{theorem} \label{min-word-general}
Let $f \in \fqt$ be a homogeneous monomially square-free   
polynomial of degree $d$, and assume 
that 
$2d \leq s$. Then 
$\varphi(f + I_X)$ is a minimum weight codeword of $\C(d)$ if and only if $f$ 
may 
be (uniquely) written as $f = \a 
(t_{b_1} + 
\a_{1} t_{c_1}). \cdots . (t_{b_d} + \a_{d} t_{c_d})$, with $\a, \a_{1}, 
\ldots, \a_{d} \in \fq^*$,  $b_1,\ldots, b_d, c_1, 
\ldots, c_d$ are $2 d $ distinct elements of $\{1,\ldots, s\}$, 
$b_i < c_i$ for all $i = 1, \ldots, d$ and $b_1 < \cdots < b_d$.
\end{theorem}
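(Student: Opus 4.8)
The plan is to reduce to Proposition \ref{min-word} via the permutation trick explained in the paragraphs preceding the theorem. First I would prove the "if" direction: suppose $f = \a(t_{b_1}+\a_1 t_{c_1})\cdots(t_{b_d}+\a_d t_{c_d})$ with the stated conditions on the indices and coefficients. Since the $2d$ indices $b_1,\ldots,b_d,c_1,\ldots,c_d$ are distinct and $b_i<c_i$, I can choose a permutation $\sigma$ of $\{1,\ldots,s\}$ sending $b_i\mapsto i$ and $c_i\mapsto d+i$ for $i=1,\ldots,d$; then $\sigma(f)=\a(t_1+\a_1 t_{d+1})\cdots(t_d+\a_d t_{2d})$, which is (up to the scalar $\a$) of the form in Proposition \ref{min-word} with $\lm(\sigma(f))=t_1\cdots t_d$ in the graded-lex order with $t_s\prec\cdots\prec t_1$. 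By Proposition \ref{min-word}, $\varphi(\sigma(f)+I_X)$ has minimum weight in $\widetilde{C}(d)$, and by equation \eqref{eq-f} the codeword $\varphi(f+I_X)\in\C(d)$ has the same weight; multiplying by $\a\in\fq^*$ does not change the weight. Hence $\varphi(f+I_X)$ is a minimum weight codeword.

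For the "only if" direction, suppose $\varphi(f+I_X)$ is a minimum weight codeword; in particular $f\neq 0$, so $f$ has some monomial $t_{i_1}\cdots t_{i_d}$. Pick $\sigma$ with $\sigma(i_\ell)=\ell$ as in the discussion before the theorem, and let $\a\in\fq^*$ be the coefficient of $t_1\cdots t_d$ in $\sigma(f)$, so that $\a^{-1}\sigma(f)$ is monic with leading monomial $t_1\cdots t_d$ under the chosen order. By \eqref{eq-f} and scaling, $\varphi(\a^{-1}\sigma(f)+I_X)$ is a minimum weight codeword of $\widetilde{C}(d)$, so Proposition \ref{min-word} applies and gives $\a^{-1}\sigma(f)=(t_1+\a_1 t_{c_1'})\cdots(t_d+\a_d t_{c_d'})$ with $c_1',\ldots,c_d'\in\{d+1,\ldots,s\}$ and $\a_i\in\fq^*$. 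Applying $\sigma^{-1}$ and multiplying by $\a$ yields $f=\a(t_{b_1'}+\a_1 t_{c_1''})\cdots(t_{b_d'}+\a_d t_{c_d''})$ where $b_\ell'=\sigma^{-1}(\ell)$ and $c_\ell''=\sigma^{-1}(c_\ell')$; since $\sigma^{-1}$ is a bijection and $\{1,\ldots,d\}$ is disjoint from $\{d+1,\ldots,s\}$, the $2d$ indices $b_1',\ldots,b_d',c_1'',\ldots,c_d''$ are distinct. Finally I reorder the factors so that the first indices are increasing and, within each factor, swap to ensure $b_i<c_i$ (renaming the coefficient $\a_i$ to its inverse times the appropriate scalar when a swap occurs, using that $t_{b}+\a t_c = \a(t_c+\a^{-1}t_b)$ and absorbing the scalars into the overall $\a$), which brings $f$ to exactly the claimed normal form.

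The remaining point is uniqueness of this representation. Here I would argue that the overall scalar $\a$ is forced (it is, up to sign conventions, recoverable from $f$ once the factorization is fixed), and that the unordered set of factors $\{t_{b_i}+\a_i t_{c_i}\}$ is determined by $f$: since $\fqt$ is a UFD and each factor $t_{b_i}+\a_i t_{c_i}$ is irreducible (a nonzero linear form in two distinct variables), the factorization of $f$ into irreducibles is unique up to units and reordering; the normalization conditions $b_1<\cdots<b_d$ and $b_i<c_i$ then pin down the ordering of the factors and the internal presentation of each factor uniquely, and $\a$ is whatever scalar remains. The main obstacle I anticipate is purely bookkeeping: carefully tracking how the scalars $\a_i$ and the global $\a$ transform under the reorderings and the $b_i\leftrightarrow c_i$ swaps, and checking that no two distinct normal forms can give the same $f$ — essentially a careful application of unique factorization combined with the imposed ordering conventions, with no deep new idea required beyond Proposition \ref{min-word} and \eqref{eq-f}.
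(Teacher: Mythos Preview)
Your proposal is correct and follows essentially the same approach as the paper: reduce to Proposition~\ref{min-word} via a permutation $\sigma$ of the variables, invoke equation~\eqref{eq-f} to transfer weights between $\C(d)$ and $\widetilde{C}(d)$, apply $\sigma^{-1}$ to the factored form, and then normalize the factors so that $b_i<c_i$ and $b_1<\cdots<b_d$. The paper's proof is terser (it only writes out the ``only if'' direction and handles uniqueness by the same swap-and-reorder normalization you describe), while you spell out both directions and add the UFD remark for uniqueness, but the underlying argument is the same.
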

\begin{proof}
For any $f \in S_d$ such that $\varphi(f + I_X)$ is a minimum weight 
codeword of $\C(d)$ we may find a permutation $\sigma$ such that 
$\lm(\sigma(f)) 
= t_1. \cdots . t_d$, and clearly 
$\varphi(\sigma(f) + I_X)$ is a minimum weight codeword of $\widetilde{C}(d)$. 
Thus, for some $a \in \fq^*$ the polynomial $a \sigma(f)$ is monic, and   
Proposition \ref{min-word} describes the form of $a \sigma(f)$. Applying 
the isomorphism $\sigma^{-1} : S_d \rightarrow S_d$ to $\sigma(f)$ we get that 
$f = a 
(t_{b_1} + \a_{1} t_{c_1}). \cdots . (t_{b_d} + \a_{d} t_{c_d})$, where $a, 
\a_{1}, 
\ldots, \a_{d} \in \fq^*$  and 
$b_1,\ldots, b_d, c_1, 
\ldots, c_d$ are $2 d $ distinct elements of $\{1,\ldots, s\}$. 
To obtain a unique description for each 
polynomial, we observe that since 
 $t_{b_i} + \a_{i} t_{c_i} = \a_{i}(t_{c_i} 
+ \a_{i}^{-1}
t_{b_i})$ for any $i \in \{1, \ldots, d\}$, we may assume that 
$b_i < c_i$ for all $i = 1, \ldots, d$, and after a relabeling 
of the $b_i$'s (and the corresponding $c_i$'s) we may also assume that
$b_1 < \cdots < b_d$.
\end{proof}

Observe that if $f$ is as in the statement of the above proposition, then 
$\lm(f) =t_{b_1} . \cdots . t_{b_d}$. For the proof of the next result, we 
recall that the support of a codeword $\boldsymbol{v}$, denoted by 
$\Supp(\boldsymbol{v})$, is the set of points $P \in (\fq^*)^s$ corresponding 
to positions where $\boldsymbol{v}$ has nonzero entries.

\begin{theorem} \label{num-min-words}
The number of minimal weight codewords of $\C(d)$, in the case where $2 d \leq 
s$ is 
\[
\frac{(q - 1)^{d+1} \prod_{i = 0}^{2d - 1} (s - i)  }{d! \, 2^d} .
\]
\end{theorem}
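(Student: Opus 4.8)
The plan is to combine the characterization in Theorem \ref{min-word-general} with a direct enumeration. The first step is to note that counting codewords is the same as counting polynomials: since $\varphi$ is an isomorphism and, because $q \geq 4$, the square-free degree-$d$ monomials $t_1^{a_1} \cdots t_s^{a_s} + I_X$ all belong to the standard monomial basis $\{t^a + I_X \mid 0 \leq a_i \leq q - 2\}$ of $\fqt / I_X$, the assignment $f \mapsto \varphi(f + I_X)$ is a bijection between $S_d$ and $\C(d)$. Hence the minimum weight codewords of $\C(d)$ are in bijection with the polynomials $f \in S_d$ described by Theorem \ref{min-word-general}, and by the uniqueness clause of that theorem it suffices to count the admissible data $(\a, \a_1, \ldots, \a_d, b_1, \ldots, b_d, c_1, \ldots, c_d)$.

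Next I would split the count into a scalar part and an index part. The scalar $\a \in \fq^*$ contributes a factor $q - 1$, and the scalars $\a_1, \ldots, \a_d \in \fq^*$ contribute a factor $(q - 1)^d$, independently of the choice of indices. For the index part one must count the families of $2d$ distinct elements $b_1, \ldots, b_d, c_1, \ldots, c_d$ of $\{1, \ldots, s\}$ subject to $b_i < c_i$ for all $i$ and $b_1 < \cdots < b_d$. The key observation is that such a family is precisely the same datum as a set of $d$ pairwise disjoint $2$-element subsets of $\{1, \ldots, s\}$: from the set of pairs, orient each pair so its smaller element plays the role of $b_i$, and then list the pairs in increasing order of their $b_i$; conversely every admissible tuple yields such a set of disjoint pairs, and the two operations are mutually inverse. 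Thus the index part counts partial matchings of size $d$ in the complete graph on $s$ vertices, which equals $\frac{1}{d!}\binom{s}{2}\binom{s-2}{2}\cdots\binom{s-2d+2}{2} = \frac{1}{d!\,2^d}\prod_{i=0}^{2d-1}(s - i)$.

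Multiplying the scalar and index contributions yields $(q - 1)\cdot(q - 1)^d\cdot\frac{\prod_{i=0}^{2d-1}(s - i)}{d!\,2^d} = \frac{(q - 1)^{d+1}\prod_{i=0}^{2d-1}(s - i)}{d!\,2^d}$, as claimed. The step needing the most care is the bijection between admissible index tuples and size-$d$ partial matchings: one has to verify that the normalization $b_i < c_i$ together with $b_1 < \cdots < b_d$ selects exactly one representative from each unordered family of disjoint pairs, so that neither over- nor under-counting occurs. Everything else is bookkeeping; the only other point to record is that no two distinct admissible $f$ produce the same codeword, which is immediate from the bijectivity of $f \mapsto \varphi(f + I_X)$ established at the outset.
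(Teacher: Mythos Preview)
Your proof is correct and follows the same overall counting scheme as the paper: factor out the leading scalar $\alpha$, then count monic polynomials via the bijection with size-$d$ partial matchings on $\{1,\ldots,s\}$, obtaining $\frac{1}{d!}\prod_{k=0}^{d-1}\binom{s-2k}{2}$ for the index part and $(q-1)^d$ for the $\alpha_i$.

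The one substantive difference is how you show that distinct admissible data give distinct codewords. You observe that the square-free degree-$d$ monomials lie in $\Delta(I_X)$ (since $q\geq 4$), hence are linearly independent modulo $I_X$, so $f\mapsto\varphi(f+I_X)$ is injective on $S_d$; combined with the uniqueness clause of Theorem~\ref{min-word-general}, distinct data give distinct $f$ and therefore distinct codewords. The paper instead works directly at the level of supports: for any two distinct triples it explicitly exhibits a point $P\in(\fq^*)^s$ lying in $\Supp(\varphi(g+I_X))$ but not in $\Supp(\varphi(f+I_X))$, handling the cases $(b_i)\neq(b'_i)$, $(c_i)\neq(c'_i)$, $(\alpha_i)\neq(\alpha'_i)$ separately. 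Your route is shorter and more conceptual; the paper's is more hands-on and does not lean on the uniqueness assertion of Theorem~\ref{min-word-general}, which is only sketched there.
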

\begin{proof}
We start by noting that if $\varphi(f + I_X)$ is a minimum weight codeword, 
where  
$f =  
(t_{b_1} + \a_{1} t_{c_1}). \cdots . (t_{b_d} + \a_{d} t_{c_d})$, 
then the set $\{ \varphi(a f + I_X) \mid a \in \fq^*\}$ contains  
$q - 1$ distinct minimum weight codewords, so we will consider from now on only 
monic 
polynomials in $S_d$ whose evaluation produces minimum weight codewords. 
The polynomial $f$ is characterized by the triple of $d$-tuples
\[
\big( (b_1, \ldots, b_d), (c_1, \ldots, c_d), (\a_1, \ldots, \a_d)\big),
\]
where $b_i$, $c_i$ and $\a_i$, for all $i = 1, \ldots, d$, are as in the 
statement of Theorem \ref{min-word-general}. 
We check if polynomials corresponding to distinct triples may produce the same 
codeword. Let 
\[
\big( (b'_1, \ldots, b'_d), (c'_1, \ldots, c'_d), (\a'_1, \ldots, \a'_d)\big)
\]
be the triple to which corresponds the polynomial $g$. Suppose that
$(b_1, \ldots, b_d) \neq (b'_1, \ldots, b'_d)$, then there 
exists $j \in \{1, \ldots, d\}$ such that $b_i = b'_i$  if $i < j$, and 
$b_j \neq b'_j$, and we assume w.l.o.g.\ that $b_j < b'_j$.  Let $P = (\beta_1, 
\ldots, \beta_s) \in (\fq^*)^s$ be such that $\beta_{b'_i} \neq - \a'_i 
\beta_{c'_i}$ for all $i = 1, \ldots, d$
so that $P \in  \Supp(\varphi(g + I_X))$, and such that 
$\beta_{b_j} = - \a_j \beta_{c_j}$, so that $P \notin  \Supp(\varphi(f + 
I_X))$. Thus $\varphi(f + I_X)  \neq \varphi(g + I_X)$, and we assume from now 
on that $(b_1, \ldots, b_d) = (b'_1, \ldots, b'_d)$.
Suppose that $(c_1, \ldots, c_d) \neq (c'_1, \ldots, c'_d)$, 
then there 
exists $j \in \{1, \ldots, d\}$ such that $c_i = c'_i$  if $i < j$, and 
$c_j \neq c'_j$, and we assume w.l.o.g.\ that $c_j < c'_j$.  
Let $P = (\beta_1, 
\ldots, \beta_s) \in (\fq^*)^s$ be such that $\a_j 
\beta_{c_j} = - \beta_{b_j}$, so that $P \notin \Supp(\varphi(f + I_X))$, and 
also such that  
$\a'_i  \beta_{c'_i} \neq - \beta_{b_i}$ for all $i = 1,\ldots, d$, so that 
$P \in \Supp(\varphi(g + I_X))$. Again we have 
$\varphi(f + I_X)  \neq \varphi(g + I_X)$, and we assume furthermore from now 
on that $(c_1, \ldots, c_d) = (c'_1, \ldots, c'_d)$. Suppose that 
$(\a_1, \ldots, \a_d) \neq (\a'_1, \ldots, \a'_d)$, and let $j \in \{1, \ldots, 
d\}$ be such that $\a_j \neq \a'_j$. Let $P = (\beta_1, \ldots, \beta_s) 
\in (\fq^*)^s$ be such that $\beta_{b_j} = -\a_j \beta_{c_j}$, so that $P 
\notin 
\Supp(\varphi(f + I_X))$ and $\beta_{b_j} \neq - \a'_j \beta_{c_j}$, and also 
such that $\beta_{b_i} \neq - \a'_i \beta_{c_i}$ for all $i \in \{1, \ldots, 
d\} 
\setminus \{j\}$. Then $P \in \Supp(\varphi(g + I_X))$ and 
$\varphi(f + I_X)  \neq \varphi(g + I_X)$. 

This completes the proof that 
each polynomial of the form $f =  
(t_{b_1} + \a_{1} t_{c_1}). \cdots . (t_{b_d} + \a_{d} t_{c_d})$, 
where  $\a_1, 
\ldots, \a_d \in \fq^*$,  $b_1,\ldots, b_d, c_1, 
\ldots, c_d$ are $2 d $ distinct elements of $\{1,\ldots, s\}$, 
$b_i < c_i$ for all $i = 1, \ldots, d$ and 
$b_1 < \cdots < b_d$
produces a 
distinct minimum weight codeword.
We want to count the number of such polynomials. We start by choosing  pairs 
$(b_i, c_i)$, with $1 \leq b_i < c_i \leq s$, and $i = 1, \ldots, d$. To choose 
the first pair we have $\binom{s}{2}$ possibilities (the least number of the 
pair will be $b_i$). For the second pair we have $\binom{s - 2}{2}$ 
possibilities, and so on. After choosing $d$ pairs we may order them in 
increasing order of the first entry, to get the sequence $((b_1, c_1), \ldots, 
(b_d, c_d))$. Note that there are $d!$ ways of arriving at the same sequence 
using this process. Thus we have 
\[
\frac{1}{d!} \; \prod_{k = 0}^{d - 1} \binom{s - 2 k }{2} = 
 \frac{s(s-1) . \cdots . (s - 2d + 2)(s - 2d + 1)}{d! \,2^d} 
\]
possibilities for distinct sequences 
$((b_1, c_1), \ldots, (b_d, c_d))$, where $b_i < c_i$ for all $i = 1, \ldots, 
d$ and $b_1 < \ldots, < b_d$. For the $d$-tuple $(\a_1, \ldots , \a_d)$ we have 
$(q - 1)^d$ possibilities. Thus we have a total of 
\[
(q - 1)^d \frac{\prod_{i = 0}^{2d - 1} (s - i)}{d! \, 2^d}
\]
monic polynomials of the form $f =  
(t_{b_1} + \a_{1} t_{c_1}). \cdots . (t_{b_d} + \a_{d} t_{c_d})$, 
where  $\a_{1}, 
\ldots, \a_{d} \in \fq^*$,  $b_1,\ldots, b_d, c_1, 
\ldots, c_d$ are $2 d $ distinct elements of $\{1,\ldots, s\}$, 
$b_1 < \cdots < b_d$ and $b_i < c_i$ for all $i = 1, \ldots, d$. Finally, from 
what we have done above,  we get that there are exactly 
\[
\frac{(q - 1)^{d+1} \prod_{i = 0}^{2d - 1} (s - i)  }{d! \, 2^d} 
\]
codewords of minimum weight in $\C(d)$.
\end{proof}

To characterize the number of minimal weight codewords and find their number, for $d$ in the range 
$s < 2d < 2s$ we use a distinctive characteristic of toric codes defined over hypersimplex, namely
that $\C(d)$ and $\C(s - d)$ are monomially equivalent. This equivalence is a consequence of two facts:
first, the bijection between 
\[
L(d) := \{ t_1^{a_1} \cdots t_s^{a_s}  
\mid a_i \in \{0,1\} \; \forall \; i = 1, \ldots, s \textrm{ and } \sum_i a_i = 
d \}
\]
and
\[
L(s - d) = \{ t_1^{a_1} \cdots t_s^{a_s}  
\mid a_i \in \{0,1\} \; \forall \; i = 1, \ldots, s \textrm{ and } \sum_i a_i = 
s - d \}
\]
given by $M := t_1^{a_1} \cdots t_s^{a_s} \mapsto M^c :=  t_1^{1 -a_1} \cdots t_s^{1 -a_s}$, and second, 
the bijection between the points of $X$ given by 
$P_i := (\b_{i 1}, \ldots, \b_{i s}) \mapsto  Q_i :=(\b_{i 1}^{-1}, \ldots, 
\b_{i s}^{-1})$, 
for all $i = 1,\ldots, n$. Clearly $\{P_1, \ldots, P_n\} = \{Q_1,\ldots, Q_n\}$ and for any $M \in L(d)$ we get
$M(P_i) = (\prod_{j = 1}^s \b_{i j}) M^c(Q_i)$ for all $P_i \in X$. The
bijection between $L(d)$ and $L(s - d)$ may be extended to the vector space 
they generate, so that if $f$ is a homogeneous monomially square-free  
polynomial of degree 
$d$ we have 
\begin{equation} \label{f-fc}
f(P_i) = (\prod_{j = 1}^s \b_{i j}) f^c(Q_i)
\end{equation}
for all $P_i \in X$.

From this it is easy 
to deduce that
we may obtain $\varphi(\L(d))$ from $\varphi(\L(s - d))$  after a reordering of the $s$-tuples of 
$\varphi(\L(s - d))$ together with multiplying the entry corresponding to point the $Q_i$ by 
$\prod_{j = 1}^s \b_{i j}$, where $Q_i = (\b_{i 1}^{-1}, \ldots, \b_{i 
s}^{-1})$ for all 
$i = 1, \ldots, n$.

\begin{corollary} \label{corol.2.4}
Assume that
$s < 2d < 2s$ and
let $f \in \fqt$ be a homogeneous, monomially square-free   
polynomial of degree $d$. 
Then
$\varphi(f + I_X)$ is a minimum weight codeword of $\C(d)$ if and only if $f$ 
may 
be (uniquely) written as 
\[
f = \a 
(t_{b_1} + \a_{1} t_{c_1}). \cdots . (t_{b_{s- d}} + \a_{s - d} t_{c_{s - 
d}}) 
\prod_{\stackrel{j = 1}{j \notin A_f}}^s t_j,
\] 
where $\a, \a_{1}, 
\ldots, \a_{s - d} \in \fq^*$, $A_f := \{ b_1,\ldots, b_{s - d}, c_1, 
\ldots, c_{s - d}\} \subset \{1,\ldots, s\}$ 
is a set with  $2 (s - d) $ distinct elements,  
$b_1 < \cdots < b_{s - d}$ and $b_i < c_i$ for all $i = 1, \ldots, {s - d}$.

The number of minimal weight codewords of $\C(d)$ in this case 
is 
\[
\frac{(q - 1)^{s-d+1} \prod_{i = 0}^{2s - 2d - 1} (s - i)  }{(s - d)! \, 2^{s 
-d}} .
\]
\end{corollary}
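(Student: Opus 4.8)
The plan is to deduce Corollary \ref{corol.2.4} from Theorem \ref{min-word-general} and Theorem \ref{num-min-words} by transporting everything through the monomial equivalence between $\C(d)$ and $\C(s-d)$ recorded above in \eqref{f-fc}. Since we assume $s < 2d$, we have $2(s-d) = 2s - 2d < s$, so Theorem \ref{min-word-general} and Theorem \ref{num-min-words} apply verbatim with $s-d$ in place of $d$. First I would fix the dictionary: for a homogeneous monomially square-free polynomial $f$ of degree $d$, the complement $f^c$ is homogeneous monomially square-free of degree $s-d$, the map $f \mapsto f^c$ is a bijection on these polynomial spaces (it is an involution on monomials extended linearly), and by \eqref{f-fc} we have $f(P_i) = 0$ if and only if $f^c(Q_i) = 0$. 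As $\{Q_1,\ldots,Q_n\} = X$, the assignment $P_i \mapsto Q_i$ is a permutation of $X$ and the scalars $\prod_{j} \b_{ij}$ are nonzero, so $f + I_X \mapsto f^c + I_X$ induces a coordinate permutation composed with nonzero rescalings carrying $\C(d)$ onto $\C(s-d)$; in particular it preserves Hamming weights, and hence the entire weight distribution.

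Consequently, $\varphi(f + I_X)$ is a minimum weight codeword of $\C(d)$ if and only if $\varphi(f^c + I_X)$ is a minimum weight codeword of $\C(s-d)$, which by Theorem \ref{min-word-general} happens precisely when
\[
f^c = \a\,(t_{b_1} + \a_1 t_{c_1}) \cdots (t_{b_{s-d}} + \a_{s-d} t_{c_{s-d}}),
\]
with $\a, \a_1,\ldots,\a_{s-d} \in \fq^*$, the indices $b_1,\ldots,b_{s-d},c_1,\ldots,c_{s-d}$ being $2(s-d)$ distinct elements of $\{1,\ldots,s\}$, $b_i < c_i$ and $b_1 < \cdots < b_{s-d}$. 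It then remains to compute $f = (f^c)^c$. Here I would index the monomials of $f^c$ by subsets $S \subseteq \{1,\ldots,s-d\}$: the monomial $\prod_{i\in S} t_{c_i}\prod_{i\notin S}t_{b_i}$ occurs with coefficient $\a\prod_{i\in S}\a_i$, and its complement is $\bigl(\prod_{j\notin A_f}t_j\bigr)\prod_{i\in S}t_{b_i}\prod_{i\notin S}t_{c_i}$, where $A_f = \{b_1,\ldots,b_{s-d},c_1,\ldots,c_{s-d}\}$. Summing over $S$ and factoring,
\[
f = \a\Bigl(\prod_{j\notin A_f}t_j\Bigr)\prod_{i=1}^{s-d}(t_{c_i}+\a_i t_{b_i}) = \a\Bigl(\prod_{i=1}^{s-d}\a_i\Bigr)\Bigl(\prod_{j\notin A_f}t_j\Bigr)\prod_{i=1}^{s-d}\bigl(t_{b_i}+\a_i^{-1}t_{c_i}\bigr),
\]
which is the claimed form after renaming the scalars. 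Uniqueness of the representation of $f$ transfers from the uniqueness in Theorem \ref{min-word-general} for $f^c$ via the bijectivity of $f\mapsto f^c$, since all the constraints involve only the unchanged data $b_i, c_i$.

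For the count, since the equivalence above preserves the weight distribution, the number of minimum weight codewords of $\C(d)$ equals that of $\C(s-d)$, which by Theorem \ref{num-min-words} applied with $s-d$ in place of $d$ equals $\dfrac{(q-1)^{s-d+1}\prod_{i=0}^{2s-2d-1}(s-i)}{(s-d)!\,2^{s-d}}$. The only genuinely delicate point — the rest being routine expansion of a complemented product of binomials, which the subset index $S$ keeps organized — is to verify carefully that $f\mapsto f^c$ is honestly a bijection between $\varphi(\L(d))$ and $\varphi(\L(s-d))$ realized by a permutation-plus-rescaling of coordinates, so that no minimum weight codeword is created or lost and the full weight enumerator is preserved; this is exactly what \eqref{f-fc} and the surrounding discussion provide.
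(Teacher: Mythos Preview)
Your proof is correct and follows essentially the same approach as the paper's: reduce to $\C(s-d)$ via the monomial equivalence encoded in \eqref{f-fc}, apply Theorem \ref{min-word-general} there, and then compute the complement $(f^c)^c$ explicitly; the only cosmetic difference is that the paper establishes the complement formula $\bigl(\prod_i(t_{u_i}+\gamma_i t_{v_i})\bigr)^c = \prod_i(t_{v_i}+\gamma_i t_{u_i})\prod_{j\notin A}t_j$ by induction on the number of binomial factors, whereas you obtain it directly by expanding the product over subsets $S$ and complementing monomial-by-monomial. The count is obtained identically in both, by substituting $s-d$ for $d$ in Theorem \ref{num-min-words}.
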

\begin{proof}
Let $r$ be a positive integer. The bijection $M \mapsto M^c$ defined above 
between square-free monomials of degree $r$ and square-free monomials of degree 
$s -r$ may be extended to a bijection $f \mapsto f^c$ between the spaces of 
polynomials 
generated by these two sets of monomials.

Assume  that $r < s/2$, and let $A_r 
:= 
\{u_1, \ldots, u_r, v_1, \ldots, v_r\} \subset \{1, \ldots, s\}$ be a set of 
$2r$ distinct elements. Let $\gamma_1, \ldots, \gamma_r \in \fq^*$. We claim 
that  
\[
\left( (t_{u_1} + \gamma_1 t_{v_1}). \cdots . (t_{u_r} + \gamma_r t_{v_r}) 
\right)^c = 
(t_{v_1} + \gamma_1 t_{u_1}). \cdots . (t_{v_r} + \gamma_r t_{u_r}) 
\prod_{\stackrel{j = 1}{j \notin A_r}}^s t_j.
\]
We prove the claim by induction. The case $r = 1$ is simple to verify. Assume 
now that $r \geq 2$ and 
that the claim holds for $r - 1$. Then 
\begin{equation*} 
\begin{split}
( (t_{u_1} &+ \gamma_1 t_{v_1}). \cdots . (t_{u_{r-1}} + \gamma_{r - 1} 
t_{v_{r-1}} )
(t_{u_r} + \gamma_r t_{v_r} ) )^c  = \\
 &\left( (t_{u_1} + \gamma_1 t_{v_1}). \cdots . (t_{u_{r - 1}} + \gamma_{r - 1} 
 t_{v_{r- 1}}) t_{u_r} \right)^c \\
 &+ \left( (t_{u_1} + \gamma_1 t_{v_1}). \cdots . (t_{u_{r - 1}} + \gamma_{r - 
 1} 
 t_{v_{r- 1}}) 
  \gamma_r t_{v_r} \right)^c
\end{split}
\end{equation*}
From the definition of the bijection and the induction hypothesis, we get 
\begin{equation*} 
\begin{split}
(t_{u_1} &+ \gamma_1 t_{v_1}). \cdots . (t_{u_{r - 1}} + \gamma_{r - 1} 
 t_{v_{r- 1}}) t_{u_r} )^c = \\ 
 &(t_{v_1} + \gamma_1 t_{u_1}). \cdots . 
 (t_{v_{r - 1}} + \gamma_{r - 1} t_{u_{r - 1}}) 
 \prod_{\stackrel{j = 1}{j \notin A_{r - 1} \cup \{u_r\} }}^s t_j 
\end{split}
\end{equation*}
Similarly 
\begin{equation*} 
\begin{split}
(t_{u_1} &+ \gamma_1 t_{v_1}). \cdots . (t_{u_{r - 1}} + \gamma_{r - 1} 
 t_{v_{r- 1}}) \gamma_r t_{v_r} )^c = \\ 
 &(t_{v_1} + \gamma_1 t_{u_1}). \cdots . 
 (t_{v_{r - 1}} + \gamma_{r - 1} t_{u_{r - 1}}) \gamma_r
 \prod_{\stackrel{j = 1}{j \notin A_{r - 1} \cup \{v_r\} }}^s t_j 
\end{split}
\end{equation*}
so that
\begin{equation*} 
\begin{split}
( (t_{u_1} &+ \gamma_1 t_{v_1}). \cdots . (t_{u_{r-1}} + \gamma_{r - 1} 
t_{v_{r-1}} )
(t_{u_r} + \gamma_r t_{v_r} ) )^c  = \\
  &(t_{v_1} + \gamma_1 t_{u_1}). \cdots . 
  (t_{v_{r - 1}} + \gamma_{r - 1} t_{u_{r - 1}})
  (t_{v_{r}} + \gamma_{r} t_{u_{r}})
  \prod_{\stackrel{j = 1}{j \notin A_{r - 1} \cup \{u_r, v_r\} }}^s t_j 
\end{split}
\end{equation*}
which proves the claim since $A_r  = A_{r - 1} \cup \{u_r, v_r\}$.

Now we apply the claim to prove the statement on the characterization of 
minimal weight codewords. We know that $\C(d)$ is monomially equivalent to 
$\C(s - d)$ and from $s < 2d$ we get $2(s - d) < s$. Thus, from Theorem 
\ref{min-word-general} we know the form of the polynomials $f$ whose evaluation 
produces the minimal weight codewords of $\C(s - d)$, and from Equation 
\eqref{f-fc} we get that the minimal weight codewords of $\C(d)$ are obtained 
from the 
evaluation of the polynomials $f^c$. Applying the claim proved above to the 
polynomials in Theorem \ref{min-word-general} and making the same 
normalizations we did at the end of the proof of that Theorem, 
we arrive at the first statement of the present Theorem.

As for the number of minimal weight codewords of $\C(d)$ in the case
$s < 2d < 2s$, from the isomorphism mentioned above, we know that it is equal 
to the  
number of 
minimal weight codewords of $\C(s - d)$, so we get what we want by replacing 
$d$ 
by $s - d$ in the formula of Theorem \ref{num-min-words}.
\end{proof}

\section{Characterization and number of next-to-minimal weight codewords}

We want to characterize the next-to-minimal codewords of $\C(d)$, and  to count 
them. We start with an auxiliary result which will be useful in the proof of 
the main result.

\begin{lemma} \label{nonzeros-linear}
Let $u$ be an integer such that $1 \leq u \leq s$. The number of $s$-tuples in 
$(\fq^*)^s$ which are not zeros of the polynomial $\a_1 t_1 + \cdots + \a_u 
t_u 
\in \fqt$, where 
$\a_1, \ldots, \a_u \in \fq^*$ is 
\[
D_u =  \left(\frac{(q - 1)^{u+1} + (-1)^u}{q} + (-1)^{u+1}\right)(q - 1)^{s - 
u}.
\]
Moreover, for $k$ such that 
$1 \leq 2k - 1 < 2k < 2 k + 1 < 2 k + 2 \leq s$ we have
\[
D_{2 k - 1} > D_{2 k + 1} > \frac{(q - 1)^{s + 1}}{q} > D_{2 k + 2} > D_{2 k}. 
\]
\end{lemma}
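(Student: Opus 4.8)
The plan is to reduce the count to the first $u$ coordinates, evaluate it by a one-line recurrence, and then extract the chain of inequalities from a convenient closed form of $D_u$.

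First I would note that $\a_1 t_1 + \cdots + \a_u t_u$ involves only $t_1, \ldots, t_u$, so an $s$-tuple $(\r_1, \ldots, \r_s) \in (\fq^*)^s$ fails to be a zero of it exactly when $(\r_1, \ldots, \r_u)$ does, the remaining entries $\r_{u+1}, \ldots, \r_s$ being arbitrary in $\fq^*$. Hence $D_u = N_u\,(q-1)^{s-u}$, where $N_u$ is the number of $(x_1, \ldots, x_u) \in (\fq^*)^u$ with $\a_1 x_1 + \cdots + \a_u x_u \neq 0$; and the substitution $y_i = \a_i x_i$ (a bijection of $\fq^*$ onto itself for each $i$) identifies $N_u$ with the number of $(y_1, \ldots, y_u) \in (\fq^*)^u$ with $y_1 + \cdots + y_u \neq 0$.

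Next I would compute $N_u$ via its complement $Z_u := (q-1)^u - N_u$, the number of $(y_1, \ldots, y_u) \in (\fq^*)^u$ with $y_1 + \cdots + y_u = 0$. For $u \geq 2$, after choosing $y_1, \ldots, y_{u-1} \in \fq^*$ the last coordinate is forced to be $y_u = -(y_1 + \cdots + y_{u-1})$, which lies in $\fq^*$ precisely when $y_1 + \cdots + y_{u-1} \neq 0$; so $Z_u = N_{u-1} = (q-1)^{u-1} - Z_{u-1}$, with $Z_1 = 0$. A routine induction then gives $Z_u = \frac{(q-1)^u + (-1)^u(q-1)}{q}$, whence
\[
N_u = (q-1)^u - Z_u = \frac{(q-1)^{u+1} - (-1)^u(q-1)}{q} = \frac{(q-1)^{u+1} + (-1)^u}{q} + (-1)^{u+1},
\]
and multiplying by $(q-1)^{s-u}$ produces the stated value of $D_u$.

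For the inequalities I would first record the compact form $D_u = \frac{(q-1)^{s+1} - (-1)^u(q-1)^{s-u+1}}{q}$, read off directly from the first expression for $N_u$. This immediately gives $D_u > \frac{(q-1)^{s+1}}{q}$ for $u$ odd and $D_u < \frac{(q-1)^{s+1}}{q}$ for $u$ even (the hypothesis $2k+2 \leq s$ keeps the exponent $s-u+1$ positive, so the correction term is genuinely nonzero), hence $D_{2k+1} > \frac{(q-1)^{s+1}}{q} > D_{2k+2}$. Comparing the two odd indices, $D_{2k-1} - D_{2k+1} = \frac{1}{q}\big((q-1)^{s-2k+2} - (q-1)^{s-2k}\big) > 0$ since $q-1 \geq 3 > 1$; comparing the two even indices, $D_{2k+2} - D_{2k} = \frac{1}{q}\big((q-1)^{s-2k+1} - (q-1)^{s-2k-1}\big) > 0$ for the same reason. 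Chaining the four comparisons yields $D_{2k-1} > D_{2k+1} > \frac{(q-1)^{s+1}}{q} > D_{2k+2} > D_{2k}$. Everything here is elementary; the only place that needs care is the bookkeeping in the middle step — solving the recurrence for $Z_u$ and reconciling the resulting ``natural'' formula with the parity-laden expression written in the statement.
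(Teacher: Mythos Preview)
Your proof is correct and follows essentially the same approach as the paper: reduce to the first $u$ coordinates, set up the same one-step recurrence (you phrase it for the zero count $Z_u$, the paper for the nonzero count $\widetilde{D}_u$, but these are complementary), solve it, multiply by $(q-1)^{s-u}$, and read off the chain of inequalities from the explicit form of $D_u$. Your bookkeeping is a bit tighter---you keep a single parity-uniform closed form $D_u = \frac{(q-1)^{s+1} - (-1)^u(q-1)^{s-u+1}}{q}$ throughout, whereas the paper splits into even and odd cases---but the argument is the same.
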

\begin{proof}
We denote by $\widetilde{D}_u$ (respectively, $\widetilde{E}_u$) the number of 
$u$-tuples 
in $(\fq^*)^u$ which are not zeros (respectively, are zeros)  of the polynomial 
$\a_1 t_1 + \ldots + \a_u t_u$, where $1 \leq u \leq s$.

If $u = 1$ we get that all $q - 1$ elements in  $\fq^*$ are not zeros 
of the polynomial $\a_1 t_1$, i.e. $\widetilde{D}_1 = q - 1$ and
$\widetilde{E}_1 = 0$.


For $u > 1$ let $(\b_1, \ldots, \b_u) \in (\fq^*)^u$, we consider two cases:\\
i) if $\a_1 \b_1 +  \cdots + \a_{u - 1} \b_{u - 1} =: \gamma \neq 0$ then there 
are $q -2$ values for $\b_u$ such that $\a_1 \b_1 +  \cdots + \a_{u} \b_{u}  
\neq 0$ (since we must have  $\b_u \neq -\gamma/\a_u$);\\
ii) if $\a_1 \b_1 +  \cdots + \a_{u - 1} \b_{u - 1} = 0$ then we have $q - 1$ 
values for $\b_u$ such that $\a_1 \b_1 +  \cdots + \a_{u} \b_{u}  
\neq 0$.

This shows that, for $u > 1$ we have 
\begin{equation*} 
\begin{split}
\widetilde{D}_u &= \widetilde{D}_{u - 1}(q - 2) + \widetilde{E}_{u - 1}(q - 1) 
\\ &=   \widetilde{D}_{u - 1}(q - 2)  + ( (q-1)^{u - 1} - \widetilde{D}_{u - 
1})(q -1) =
(q - 1)^u - \widetilde{D}_{u - 1}.
\end{split}
\end{equation*}

Applying recursively this equality for $\widetilde{D}_{u - 1}, \ldots , 
\widetilde{D}_{2}$ and using that $\widetilde{D}_{1} = q - 1$
we get 
$\widetilde{D}_u = (q - 1)^u - (q - 1)^{u - 1} + \cdots + (-1)^u(q - 1)^2 
+ (-1)^{u - 1} (q - 1)$. 
Then we use that, when $u$ is even we have 
$x^u - x^{u -1} + \cdots + x^2 - x + 1 = \frac{x^{u + 1} + 1}{x + 1}$
and when $u$ is odd we have $x^u - x^{u -1} + \cdots - x^2 + x - 1 = \frac{x^{u 
+ 1} - 1}{x + 1}$, 
so that, 
for $u$ even, say $u = 2k$,  we have
\[
\widetilde{D}_{2k} = \frac{(q - 1)^{2 k + 1} + 1}{q} - 1
\]
while if $u = 2k - 1$ then 
\[
\widetilde{D}_{2k-1} = \frac{(q - 1)^{2 k} - 1}{q} + 1.
\]

Thus, the number of $s$-tuples in 
$(\fq^*)^s$ which are not zeros of the polynomial $\a_1 t_1 + \cdots + \a_u t_u 
\in \fqt$ is equal to
\[
D_{2k} = \left(\frac{(q - 1)^{2 k + 1} + 1}{q} - 1\right)(q - 1)^{s - 2k} 
\]
when $u = 2k$, and when $u = 2k - 1$ is equal to 
\[
D_{2k-1} = \left(\frac{(q - 1)^{2 k} - 1}{q} + 1\right)(q - 1)^{s - 2k + 1}, 
\]

which we subsume by writing 
\[
D_{u} = \left(\frac{(q - 1)^{u+1} + (-1)^u}{q} + (-1)^{u+1}\right)(q - 1)^{s - 
u}.
\]
Let $k$ be such that 
$1 \leq 2k - 1 < 2k < 2 k + 1 < 2 k + 2 \leq s$, from
\begin{equation*} 
\begin{split}
D_{2k}  &= 
\frac{(q-1)^{s+1}}{q} - (q - 1)^{s - 2k}(1 - \frac{1}{q}) \\
D_{2k-1}  &= 
\frac{(q-1)^{s+1}}{q} + (q - 1)^{s - 2k + 1}(1 - \frac{1}{q})
\end{split}
\end{equation*}
we get that 
\[
D_{2 k - 1} > D_{2 k + 1} > \frac{(q - 1)^{s + 1}}{q} > D_{2 k + 2} > D_{2 k}. 
\]
\end{proof}

As mentioned in the beginning of Section 2,  
next-to-minimal weights 
of $\C(d)$ were obtained in \cite{car-nupur} through methods which involved 
results from
Gr\"{o}bner basis theory. In the proof of the next theorem we will need some 
of these results. We 
recall now a concept which plays an 
important role in these methods.
Let $\M$ be the set of monomials in the ring $\fqt$
and let $I \subset \fqt$ be an ideal. The footprint 
of $I$ is the set
\[
\Delta(I) := \{ M \in \M \mid M \neq \lm(f) \textrm{ for all } f \in I, f \neq 
0\}.
\]
If the footprint is finite, then the number of $s$-tuples which are zeros of 
all polynomials in $I$ is at most $| \Delta(I) |$ (see \cite[Thm. 
8.32]{becker}). A 
consequence 
of this is that the weight of $\varphi(f + I_X)$, where $\varphi$ is the 
evaluation map of Equation \eqref{varphi}, is at least $|\Delta(I_X)| - 
| \Delta(I_X + (f)) |$ (see \cite[Prop. 2.4]{car-nupur}), and it's easy to 
check that
\[
|\Delta(I_X)| - | \Delta(I_X + (f)) | \geq | \{ M \in \Delta(I_X) \mid M 
\textrm{ 
is a multiple of } \lm(f) \} |. 
\]
Thus, denoting by $\omega(\varphi(f + I_X))$ the weight of $\varphi(f + I_X)$, 
we get that
\begin{equation} \label{bound-weight}
\omega(\varphi(f + I_X)) \geq | \{ M \in \Delta(I_X) \mid M \textrm{ 
is a multiple of } \lm(f) \} |.
\end{equation}

In what follows, we will use results from Section 3 of \cite{car-nupur}, so 
from now on we assume that $q \geq 4$.

\begin{theorem} \label{second-min-word-general}
Let $f \in \fqt$ be a homogeneous, monomially	 square-free   
polynomial of degree $d$, and assume 
that 
$2d + 2\leq s$. Then 
$\varphi(f + I_X)$ is a next-to-minimal weight codeword of $\C(d)$ if and only 
if $f$ 
may 
be (uniquely) written as 
\[
f = \a \left(\prod_{i = 1}^{d - 1}( t_{b_i} + \a_{i} t_{c_i})\right)
(t_{b_{2d -1}} + \a_{b_{2 d}} t_{b_{2 d}} + \a_{b_{2 d+1}} t_{b_{2d+1}}
+ \a_{b_{2 d+2}} t_{b_{2d+2}}), 
\]
where $\a, 
\a_{1}, \ldots, \a_{d-1}, \a_{b_{2 d}}, \a_{b_{2 d +1}}, \a_{b_{2 d + 2}}  
\in \fq^*$,  $b_1,\ldots, b_{d-1}$, $c_1, \ldots, c_{d-1}$,  $b_{2d - 1}$, 
$b_{2d}$, $b_{2d + 1}$, $b_{2d + 2}$  
 are $2 d + 2 $ distinct elements of $\{1,\ldots, s\}$, 
$b_1 < \cdots < b_d$,  $b_{2d - 1} < b_{2d} < b_{2d + 1} < b_{2d + 2}$ and $b_i 
< 
c_i$ for all $i = 1, \ldots, d - 1$.
\end{theorem}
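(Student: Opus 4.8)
The plan is to mimic the structure of the proof of Theorem~\ref{min-word-general}: first reduce to the normalized situation of Proposition~\ref{min-word}, and then prove the characterization by splitting it into an easy ``if'' direction (a direct weight count) and a substantial ``only if'' direction (a rigidity statement extracted from the analysis of \cite{car-nupur}). For the reduction, pick any monomial $t_{i_1}\cdots t_{i_d}$ of $f$ and a permutation $\sigma$ with $\sigma(i_\ell)=\ell$; by \eqref{eq-f} the weight of $\varphi(f+I_X)$ in $\C(d)$ equals that of $\varphi(\sigma(f)+I_X)$ in the monomially equivalent code $\widetilde{C}(d)$, and after scaling we may take $\sigma(f)$ monic; since $t_1\cdots t_d$ is the largest square-free degree-$d$ monomial for the graded-lexicographic order with $t_s\prec\cdots\prec t_1$, automatically $\lm(\sigma(f))=t_1\cdots t_d$. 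So it suffices to prove: a homogeneous, monic, monomially square-free $g$ of degree $d$ with $\lm(g)=t_1\cdots t_d$ and $2d+2\le s$ has $\omega(\varphi(g+I_X))=\delta_2(\C(d))$ if and only if $g=\big(\prod_{i=1}^{d-1}(t_i+\a_i t_{c_i})\big)\,(t_d+\a_{e_1}t_{e_1}+\a_{e_2}t_{e_2}+\a_{e_3}t_{e_3})$, where $1,\dots,d-1,c_1,\dots,c_{d-1},d,e_1,e_2,e_3$ are $2d+2$ distinct elements of $\{1,\dots,s\}$ with $c_i>i$ and $d<e_1,e_2,e_3$, and all coefficients lie in $\fq^*$; applying $\sigma^{-1}:S_d\to S_d$ and the normalizations described at the end then gives the form in the statement.

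For the ``if'' direction, such a $g$ is a product of $d-1$ binomials and one $4$-variable linear form $L$, and since the $2d+2$ variables occurring in these $d$ factors are pairwise distinct, a point $P\in(\fq^*)^s$ lies in $\Supp(\varphi(g+I_X))$ iff every factor is nonzero at $P$, a conjunction of $d$ conditions on pairwise disjoint blocks of coordinates. Hence $|\Supp(\varphi(g+I_X))|=(q-1)^{\,s-2d-2}(\widetilde D_2)^{d-1}\widetilde D_4$, where $\widetilde D_2=(q-1)(q-2)$ and $\widetilde D_4=\tfrac{(q-1)^5+1}{q}-1=(q-1)(q-2)\big((q-1)^2+1\big)$ are the numbers of nonzeros of a binomial, resp.\ of a nondegenerate $4$-variable linear form, over their own variables (these are the quantities $\widetilde D_2,\widetilde D_4$ computed in the proof of Lemma~\ref{nonzeros-linear}). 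Multiplying out gives $(q-2)^d\big((q-1)^{\,s-d}+(q-1)^{\,s-d-2}\big)$, which equals $\delta_2(\C(d))$ by \cite{car-nupur}, so $\varphi(g+I_X)$ has next-to-minimal weight.

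The converse is the heart of the matter. Assume $\omega(\varphi(g+I_X))=\delta_2(\C(d))$, with $g$ as above. Since $\delta_2$ is the second smallest nonzero weight of $\C(d)$ and the minimum-weight codewords are exactly those described by Proposition~\ref{min-word}, $g$ is not a product of $d$ binomials; and since $g$ is monomially square-free, its irreducible factors have pairwise disjoint variable supports (if two shared a variable $t_i$, the monomials of $g$ divisible by $t_i^2$ could not all cancel), so $g=\a\,\ell_1\cdots\ell_m$ with the $\ell_j$ irreducible, at least one not a binomial. Running the machinery behind the determination of $\delta_2(\C(d))$ in \cite{car-nupur}, Section~3 --- in particular the weight bound \eqref{bound-weight} --- and tracking the equality cases, one must show: (a) if some $\ell_j$ has degree $\ge 2$, then $\omega(\varphi(g+I_X))>\delta_2(\C(d))$; and (b) if all $\ell_j$ are linear, then all but one are binomials in disjoint pairs and the exceptional linear form $L$ involves exactly four variables --- a linear form in $3$ or in $\ge 5$ variables, or an excess spread over two non-binomial factors, is excluded by the chain $D_2<D_4<D_6<\cdots<\tfrac{(q-1)^{s+1}}{q}<\cdots<D_5<D_3$ of Lemma~\ref{nonzeros-linear} (together with the disjoint-block count from the ``if'' direction), each such alternative yielding a value strictly larger than $\delta_2(\C(d))$. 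This forces $g$ into the claimed shape. I expect item (a) to be the main obstacle: for a polynomial with an irreducible factor of degree $\ge 2$ one cannot simply multiply local counts, and must instead bound the number of common zeros of that factor from above via the footprint estimates of \cite{car-nupur} and verify the resulting weight strictly exceeds $\delta_2(\C(d))$; the bookkeeping over all mixed configurations is where the work lies.

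Finally, applying $\sigma^{-1}$ recovers $f$ in the original variables, exactly as in the proof of Theorem~\ref{min-word-general}. By unique factorization in $\fqt$ the binomial factors, the $4$-term linear factor and the scalar $\a$ are determined by $f$ up to scalars; using $t_b+\a t_c=\a(t_c+\a^{-1}t_b)$ to impose $b_i<c_i$, reordering the binomial factors so that $b_1<\cdots<b_{d-1}$, and normalizing the $4$-term factor to be monic in its variable of smallest index with $b_{2d-1}<b_{2d}<b_{2d+1}<b_{2d+2}$, one gets the unique representation asserted. The distinctness of the codewords attached to distinct normalized data --- needed for the companion counting result --- would then be obtained by the same support-point argument as in the proof of Theorem~\ref{num-min-words}.
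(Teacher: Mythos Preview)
Your reduction via $\sigma$ and the ``if'' direction are correct and match the paper. The observation that irreducible factors of a monomially square-free polynomial have pairwise disjoint variable supports is valid (writing $p=t_ip_1+p_0$, $q=t_iq_1+q_0$, the $t_i^2$-part of $pq$ is $p_1q_1$, which must vanish in the domain $\fqt$), and it is a nice structural fact. However, it does not by itself yield the ``only if'' direction, and you explicitly leave item~(a) --- ruling out an irreducible factor of degree $\ge 2$ --- as an unproved obstacle. That is a genuine gap: the footprint bound \eqref{bound-weight} gives only a lower bound on the weight, and there is no general upper bound on the number of torus zeros of an irreducible homogeneous square-free polynomial of degree $\ge 2$ readily available in \cite{car-nupur} that would let you push the weight strictly above $\delta_2(\C(d))$ in every such configuration. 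Your item~(b) is also only sketched; besides the cases you list, one must exclude e.g.\ two $3$-variable linear factors or a single-variable factor paired with a longer one, and the chain of inequalities in Lemma~\ref{nonzeros-linear} concerns the $D_u$ for a \emph{single} linear form, not products of several $\widetilde D_{u_i}$, so each mixed configuration needs its own comparison.

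The paper avoids both difficulties by a different mechanism. Instead of factoring $g$ and analysing the factors, it uses Gr\"obner-basis rigidity: since $\omega(\varphi(\sigma(f)+I_X))>\delta(\C(d))$, some remainder $r_j$ of $S(t_j^{q-1}-1,\sigma(f))$ is nonzero, and by \cite[Thm.~3.1 and Lemma~4.1]{car-nupur} its leading monomial contributes exactly $(q-2)^d(q-1)^{s-d-2}$ extra footprint monomials. Equality $\omega=\delta_2(\C(d))$ then forces $\{t_1^{q-1}-1,\ldots,t_s^{q-1}-1,\sigma(f),r_j\}$ to already be a Gr\"obner basis for $I_X+(\sigma(f))$; in particular every other remainder $r_{j'}$ ($j'\neq j$) is zero, and \cite[Corol.~3.2]{car-nupur} converts each such vanishing into a divisibility $t_{j'}+\gamma_{j'}t_{e_{j'}}\mid\sigma(f)$. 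This directly produces $d-1$ binomial factors with disjoint supports, leaving a single linear cofactor $f_1$, and only then is Lemma~\ref{nonzeros-linear} invoked to pin down that $f_1$ has exactly four variables. No case analysis over arbitrary irreducible factorizations is needed. If you want to complete your approach, the missing ingredient is precisely a replacement for this step: a proof that any $g$ with an irreducible factor of degree $\ge 2$ (or with linear factors not of the shape $(2,\ldots,2,4)$) has weight strictly larger than $\delta_2(\C(d))$.
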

\begin{proof}
Let $f \in S_d$, $f \neq 0$. As seen in the beginning of Section 
\ref{sec2}, 	 
we may find a permutation $\sigma$ such that, after we endow $\sigma(\fqt)$ 
($=\fqt$) with the graded lexicographic order with $t_s \prec \cdots \prec 
t_1$, we have
$\lm(\sigma(f)) = t_1. \cdots . t_d$. 
We will assume, for the moment, that $\sigma(f)$ is monic.
In the beginning of Section 3 of \cite{car-nupur} it is observed that 
\[
\Delta(I_X) = \left\{ \prod_{i = 1}^s t_i^{a_i} \in \M \mid 0 \leq 
a_i 
\leq q - 2 \; \forall \; i = 1, \ldots, s \right\}.
\]

Thus, from Equation \eqref{bound-weight}
 we get that $\omega(\varphi(\sigma(f) + I_X)) 
\geq (q - 
2)^d (q - 1)^{s - d}$. Assume, from now on, that $2d + 2 \leq s$. In this 
case,  we know that the  
minimum distance of $\widetilde{\C}(d)$ is $(q - 
2)^d (q - 1)^{s - d}$ (see \cite[Thm. 4.5]{evalcodes}), and this 
means that $\varphi(\sigma(f) + I_X)$ is a minimum weight codeword if and only 
if 
$\{t_1^{q -1} - 1, \ldots, t_s^{q - 1} - 1, \sigma(f)\}$ is a Gr\"obner basis 
for $I_X + (\sigma(f))$. 

Assume that $\varphi(\sigma(f) + I_X)$ is a next-to-minimal weight codeword of 
$\widetilde{C}(d)$. 
Since the monomials in $\{t_{d+1}^{q - 1}, \ldots, t_s^{q - 1}, 
\lm(\sigma(f))\}$ are pairwise coprime, we get from \cite[p.\ 103--104]{cox} 
that the set  
$\{t_{d + 1}^{q - 1} - 1, \ldots, t_{s}^{q-1} - 1, \sigma(f)\}$ is a Gr\"obner 
basis for 
the ideal that it defines. We also know that $\{t_1^{q - 1} - 1, \ldots, 
t_s^{q - 1} - 1\}$ is a Gr\"obner basis (for $I_X$). Then, since
$\varphi(\sigma(f) + I_X)$ is not a minimum weight codeword we must have that 
for some $j \in \{1, \ldots, d\}$ the remainder $r_j$ in the division of the  
$S$-polynomial $S(t_j^{q - 1} - 1, \sigma(f))$ by $\{t_1^{q -1} - 1, \ldots, 
t_s^{q - 1} - 1, \sigma(f)\}$  is not zero. In \cite[Thm.\ 3.1]{car-nupur} we 
listed the possibilities for the leading monomial of $r_j$, and if 
$\varphi(\sigma(f) + I_X)$ is a next-to-minimal weight codeword the 
possibilities are (using the notation of \cite{car-nupur}) 
$M_4 := t_1 . \cdots . \widehat{t_j} . \cdots . t_d t_{e_1}^{q - 2} t_{e_2}$
or 
$M_2 :=  t_j^{q - 2} t_{1} . \cdots . \widehat{t_j} . \cdots . \widehat{t_\ell} 
. \cdots t_{d}.t_{e_1}. t_{e_2}$
where a hat over a variable means it does not appear in the product, $\ell \in 
\{1, \ldots, d \} \setminus \{j\}$, and 
$t_{e_1}$ and $t_{e_2}$ are distinct elements in the set $\{t_{d + 1}, 
\ldots, t_s\}$ (see the paragraph just before Theorem 4.5 in \cite{car-nupur}). 
Let $M \in \{M_2, M_4\}$, as a consequence of  \cite[Lemma 4.1]{car-nupur} we 
get that the number of 
monomials which are in $\Delta(I_X)$ and are multiples $M$ but are not 
multiples of $\lm(\sigma(f))$ is equal to $(q - 2)^d(q - 1)^{s - d - 2}$. 
Thus, from  
\cite[Prop. 2.4]{car-nupur} we have that 
\[
\omega(\varphi(\sigma(f) + I_X)) \geq  (q - 2)^d (q - 1)^{s 
- d} + (q - 2)^d(q - 1)^{s - d - 2}.
\]
From \cite[Thm. 4.5]{car-nupur} we 
know that the right side is the value of the next-to-minimal weight of 
$\widetilde{C}(d)$,
and since $\varphi(\sigma(f) + I_X)$ is a next-to-minimal weight codeword of 
$\widetilde{C}(d)$	we must have that 
$\{t_1^{q -1} - 1, \ldots, t_s^{q - 1} - 1, f, r_j\}$ is 
a Gr\"obner basis for the ideal it defines (which is $I_X + (\sigma(f))$). 
In particular, the remainder  in the division of the  
$S$-polynomial $S(t_{j'}^{q - 1} - 1, \sigma(f))$ by $\{t_1^{q -1} - 1, \ldots, 
t_s^{q - 1} - 1, \sigma(f)\}$  is zero, for all $j' \in \{1,\ldots, d\} 
\setminus \{j\}$. Then from \cite[Corol.\ 3.2]{car-nupur} we have 
that for all $j' \in \{1,\ldots, 
d\}\setminus\{j\}$ there exists $\gamma_{j'} \in \fq^*$ and $e_{j'} \in \{1, 
\ldots, s\} 
\setminus \{1, \ldots, d\}$ such that $t_{j'} + \gamma_{j'} t_{e_{j'}} \mid 
\sigma(f)$, hence we must have 
\[
\sigma(f) = \left(\prod_{\substack{i=1 \\ i\neq j}}^d (t_{i} + \gamma_{i} 
t_{e_{i}})\right) f_1
\]
where  $f_1 = t_j$ or $f_1 = t_j + \gamma_{v_2} t_{v_2} + \cdots + \gamma_{v_u} 
t_{v_u}$, with $\gamma_{v_2}, \ldots, \gamma_{v_u} \in \fq^*$ and 
$2 \leq 
u \leq s - 2d + 2$. In case $u \geq 2$ the variables 
$t_{v_2}, \ldots, t_{v_u}$ are distinct,  and also distinct from $t_j$ and all 
the 
variables which appear in $\prod_{\substack{i=1 \\ i\neq j}}^d (t_{i} + 
\gamma_{i} t_{e_{i}})$. For each $i \in \{1, \ldots, d\} \setminus \{ j \}$ the 
number of pairs $(\tau_i, \tau_{e_i}) \in (\fq^*)^2$ such that 
$\tau_{i} + \gamma_{i}  \tau_{e_{i}} \neq 0$ is $(q-1)^2 - (q -1 ) = (q - 2)(q 
- 
1)$, so the number of $(2d - 2)$-tuples 
$(\tau_1, \ldots, \hat{\tau}_j, \ldots, \tau_d, \tau_{e_{1}}, \ldots, 
\hat{\tau}_{e_{j}}, \ldots, \tau_{e_{d}}) \in (\fq^*)^{2d - 2}$ such that 
$\prod_{\substack{i=1 \\ i\neq j}}^d (\tau_{i} + \gamma_{i} 
\tau_{e_{i}}) \neq 0$ is $(q-2)^{d - 1}(q - 1)^{d - 1}$.
To count the number of $(s - 2d + 2)$-tuples which are not zeros of $f_1$, we 
use 
Lemma \ref{nonzeros-linear} (of course, in the statement, we must replace  $s$ 
by $s - 2d + 2$). If $u = 4$, the number of such $(s - 2d + 
2)$-tuples is 
\begin{equation*} 
\begin{split}
D_4 &= \left(\frac{(q - 1)^{5} + 1}{q} -1  \right)(q - 1)^{s - 2d -2} \\ &= 
\left( (q - 2)(q - 1)( (q -1)^2 + 1) \right) (q - 1)^{s - 2d -2}. 
\end{split}
\end{equation*}
Thus the number of $s$-tuples in $(\fq^*)^s$ which are not zeros of $\sigma(f)$ 
in the case $u = 4$ is 
\begin{equation*} 
\begin{split}
(q - &2)^{d - 1}(q - 1)^{d - 1} \left( (q - 2)(q - 1)( (q -1)^2 + 1) \right) (q 
- 
1)^{s - 2d -2} \\ &= (q - 2)^d (q - 1)^{s 
- d} + (q - 2)^d(q - 1)^{s - d - 2}
\end{split}
\end{equation*}
which is the next-to-minimal weight of $\widetilde{\C}(d)$. 

From the inequalities in 
Lemma \ref*{nonzeros-linear} we get that
$D_u \neq D_4$ for all $u \in \{1, \ldots,s - 2 d + 2\}$, $u \neq 4$, so
 $f_1$ must have exactly four 
variables, i.e. $f_1 = t_j + \gamma_{v_2} t_{v_2} + \gamma_{v_3} t_{v_3} + 
\gamma_{v_4} t_{v_4}$, with $\{\gamma_{v_2}, \gamma_{v_3}, \gamma_{v_4}\} 
\subset \fq^*$.

We assumed earlier that $\sigma(f)$ is monic, in the general case we see that 
if $\sigma(f)$ is a next-to-minimal weight codeword of $\widetilde{\C}(d)$ then 
it has the form
\[
\sigma(f) = \gamma \left(\prod_{\substack{i=1 \\ i\neq j}}^d (t_{i} + 
\gamma_{i} 
t_{e_{i}})\right) (t_j + \gamma_{v_2} t_{v_2} + \gamma_{v_3} t_{v_3} + 
\gamma_{v_4} t_{v_4}) 
\]
where $\gamma \in \fq^*$. Thus, applying 
the isomorphism $\sigma^{-1} : S_d \rightarrow S_d$ to $\sigma(f)$
we get that if $f$ is a 
next-to-minimal weight codeword of $\C(d)$ then $f$ can be written as in the 
statement. We have already commented on the uniqueness of such a form in the 
proof 
of 
Theorem \ref{min-word-general}. From the calculations above we get that if $f$ 
has this form then $\omega(\varphi(f + I_X)) = (q - 2)^d (q - 1)^{s 
- d} + (q - 2)^d(q - 1)^{s - d - 2}$, which finishes the proof of the theorem.
\end{proof}

Note that if $f$ is as in the statement of the above theorem, then $\lm(f) = 
t_{b_1}. \ldots t_{b_{d - 1}} . t_{b_{2d -1}}$.
Now we can count the number of next-to-minimal weight codewords in $\C(d)$.

\begin{theorem} \label{num-ntm-codew}
The number of next-to-minimal weight codewords of $\C(d)$, in the case where $2 
d + 2\leq s$ is 
\[
\frac{(q-1)^{d+3}\prod_{i = 0}^{2d + 1} (s - i) }{(d - 1)! \; 2^{d+2} . 3}.
\]
\end{theorem}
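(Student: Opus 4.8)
The plan is to count, exactly as in the proof of Theorem \ref{num-min-words}, the number of monic polynomials $f$ of the shape given in Theorem \ref{second-min-word-general}, then multiply by $q-1$ to account for the scalars $\alpha \in \fq^*$, after checking that distinct monic polynomials of that shape produce distinct codewords. First I would recall that by Theorem \ref{second-min-word-general} every next-to-minimal weight codeword of $\C(d)$ comes from a polynomial of the form $f = \alpha \left(\prod_{i=1}^{d-1}(t_{b_i} + \alpha_i t_{c_i})\right)(t_{b_{2d-1}} + \alpha_{b_{2d}} t_{b_{2d}} + \alpha_{b_{2d+1}} t_{b_{2d+1}} + \alpha_{b_{2d+2}} t_{b_{2d+2}})$, with the stated normalizations ($b_1 < \cdots < b_{d-1}$, $b_i < c_i$, $b_{2d-1} < b_{2d} < b_{2d+1} < b_{2d+2}$, and all $2d+2$ indices distinct), and that uniqueness of this representation (already argued in Theorem \ref{min-word-general} and invoked at the end of Theorem \ref{second-min-word-general}) reduces the problem to a counting problem over these combinatorial data.

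Next I would verify that distinct monic $f$ of this form yield distinct codewords, by the same support-comparison argument used in Theorem \ref{num-min-words}: given two such polynomials $f, g$ with data differing in some coordinate, I exhibit a point $P \in (\fq^*)^s$ lying in the support of one but not the other. The only genuinely new case compared to Theorem \ref{num-min-words} is the four-term quaternary factor $t_{b_{2d-1}} + \alpha_{b_{2d}} t_{b_{2d}} + \alpha_{b_{2d+1}} t_{b_{2d+1}} + \alpha_{b_{2d+2}} t_{b_{2d+2}}$; if the four-element set $\{b_{2d-1}, b_{2d}, b_{2d+1}, b_{2d+2}\}$ or the coefficient triple $(\alpha_{b_{2d}}, \alpha_{b_{2d+1}}, \alpha_{b_{2d+2}})$ differs, one uses that the zero-locus hyperplanes of the two quaternary forms are distinct (and that $q \geq 4$ leaves enough freedom in the remaining coordinates) to find a separating point. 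The binary factors $t_{b_i} + \alpha_i t_{c_i}$ are handled verbatim as in Theorem \ref{num-min-words}.

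For the count itself I would proceed in the same spirit as Theorem \ref{num-min-words}. There are $2d+2$ distinct indices to distribute. I would first choose the $d-1$ unordered pairs $\{b_i, c_i\}$ one at a time: $\binom{s}{2}\binom{s-2}{2}\cdots\binom{s-2(d-2)}{2}$ ways, divided by $(d-1)!$ for the ordering of the pairs, which equals $\frac{1}{(d-1)!\,2^{d-1}}\prod_{i=0}^{2d-3}(s-i)$. From the remaining $s - 2(d-1)$ indices I choose a $4$-element subset $\{b_{2d-1}, b_{2d}, b_{2d+1}, b_{2d+2}\}$ for the quaternary factor: $\binom{s-2d+2}{4} = \frac{(s-2d+2)(s-2d+1)(s-2d)(s-2d-1)}{4!}$ ways, and the internal ordering of these four is fixed by the normalization $b_{2d-1} < b_{2d} < b_{2d+1} < b_{2d+2}$. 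Multiplying the two binomial blocks gives $\frac{1}{(d-1)!\,2^{d-1}\,4!}\prod_{i=0}^{2d+1}(s-i)$ choices of index data. Each binary factor contributes a free $\alpha_i \in \fq^*$, giving $(q-1)^{d-1}$, and the quaternary factor contributes three free coefficients $\alpha_{b_{2d}}, \alpha_{b_{2d+1}}, \alpha_{b_{2d+2}} \in \fq^*$, giving $(q-1)^3$; so the number of monic polynomials is $(q-1)^{d+2}$ times the index count. Finally multiplying by $q-1$ for the overall scalar $\alpha$ yields $\frac{(q-1)^{d+3}}{(d-1)!\,2^{d-1}\,4!}\prod_{i=0}^{2d+1}(s-i)$, and since $2^{d-1}\cdot 4! = 2^{d-1}\cdot 24 = 2^{d+2}\cdot 3$, this is exactly $\frac{(q-1)^{d+3}\prod_{i=0}^{2d+1}(s-i)}{(d-1)!\,2^{d+2}\cdot 3}$, as claimed.

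The main obstacle I anticipate is the separation argument in the new quaternary case: one must be careful that when, say, the coefficient triples of two quaternary forms on the same index set differ but define the same hyperplane up to scaling — which cannot happen here since the representation is normalized so that the coefficient of $t_{b_{2d-1}}$ is $1$ — no degeneracy slips in, and that the constraint $2d+2 \leq s$ indeed leaves at least the required number of "free" coordinates (those not among the $2d+2$ indices, of which there are $s - 2d - 2 \geq 0$) to complete a separating point to an element of $(\fq^*)^s$; when $s = 2d+2$ this freedom is zero, so the separating point must be built entirely from the $2d+2$ constrained coordinates, and one checks, using $q \geq 4$, that this is still possible. Everything else is the bookkeeping of binomial coefficients and the algebraic simplification $2^{d-1}\cdot 24 = 3\cdot 2^{d+2}$.
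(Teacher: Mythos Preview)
Your proposal is correct and follows essentially the same approach as the paper's proof: the paper likewise characterizes monic next-to-minimal polynomials by the $5$-tuple $\big((b_i),(c_i),(\alpha_i),(b_{2d-1},\dots,b_{2d+2}),(\alpha_{b_{2d}},\alpha_{b_{2d+1}},\alpha_{b_{2d+2}})\big)$, verifies via support-comparison that distinct tuples yield distinct codewords (treating the quaternary factor exactly as you sketch), and counts the tuples by the same product $\frac{1}{(d-1)!}\prod_{k=0}^{d-2}\binom{s-2k}{2}\cdot(q-1)^{d-1}\cdot\binom{s-2d+2}{4}\cdot(q-1)^3$, then multiplies by $q-1$. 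Your algebraic simplification $2^{d-1}\cdot 24 = 3\cdot 2^{d+2}$ matches the paper's final expression.
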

\begin{proof}
Let $f$ be as in the statement of Theorem \ref{second-min-word-general}, and 
assume that $f$ is monic (so the set $\{\varphi(\alpha f + I_X) \mid \alpha \in 
\fq^*\}$ contains $q-1$ distinct codewords with the same support). We know that 
$f$ is characterized by the 5-tuple 
\begin{equation*} 
\begin{split}
\big( (b_1, \ldots, b_{d -1}), &(c_1,\ldots, c_{d - 1}),(\alpha_1, \ldots, 
\alpha_{d - 1}), (b_{2d - 1}, b_{2d}, b_{2d + 1}, b_{2d + 2}), \\ &
(\a_{2d}, \a_{2d + 1}, \a_{2d + 2}) \big) \in \mathbb{N}^{d - 1} \times 
\mathbb{N}^{d - 1} \times (\fq^*)^{d - 1} \times \mathbb{N}^{4} \times 
(\fq^*)^3,
\end{split}
\end{equation*}
where $\mathbb{N}$ is the set of positive integers and the $b_i$'s, $c_j$'s and 
$\a_k$'s have the restrictions which appear in the statement of Theorem 
\ref{second-min-word-general}.

Let $g$ be a homogeneous, monic,  monomially square-free polynomial of degree 
$d$ such that 
$\varphi(g + 
I_X)$ is a next-to-minimal weight codeword and let 
\begin{equation*} 
\begin{split}
\big((b'_1, \ldots, b'_{d -1}), &(c'_1,\ldots, c'_{d - 1}),(\alpha'_1, \ldots, 
\alpha'_{d - 1}), (b'_{2d - 1}, b'_{2d}, b'_{2d + 1}, b'_{2d + 2}), \\ &
(\a'_{2d}, \a'_{2d + 1}, \a'_{2d + 2}) \big) \in \mathbb{N}^{d - 1} \times 
\mathbb{N}^{d - 1} \times (\fq^*)^{d - 1} \times \mathbb{N}^{4} \times 
(\fq^*)^3
\end{split}
\end{equation*}
be the 5-tuple associated to $g$. Similarly to what was done in the 
proof of Theorem \ref{num-min-words}, one may show that 
if $(b_1, \ldots, b_{d -1}) \neq (b'_1, \ldots, b'_{d -1})$ or
$(c_1,\ldots, c_{d - 1}) \neq (c'_1,\ldots, c'_{d - 1})$ or 
$(\alpha_1, \ldots, \alpha_{d - 1}) \neq (\alpha'_1, \ldots, \alpha'_{d - 1})$
then there exists $P \in (\fq^*)^s$ such that 
$P \in \Supp(\varphi(g + I_X))$ and $P \notin \Supp(\varphi(f + I_X))$.
So we assume that
$(b_1, \ldots, b_{d -1}) = (b'_1, \ldots, b'_{d -1})$, 
$(c_1,\ldots, c_{d - 1}) = (c'_1,\ldots, c'_{d - 1})$ and 
$(\alpha_1, \ldots, \alpha_{d - 1}) = (\alpha'_1, \ldots, \alpha'_{d - 1})$.
Suppose that $(b_{2d - 1}, b_{2d}, b_{2d + 1}, b_{2d + 2}) \neq (b'_{2d - 1}, 
b'_{2d}, b'_{2d + 1}, b'_{2d + 2})$, and let $P = (\tau_1, \ldots , \tau_s)$ be 
such that $\prod_{i = 1}^{d - 1}( \tau_{b_i} + \a_{i} \tau_{c_i}) \neq 0$.
Let $j \in \{2d - 1, 2d, 2d + 1 , 2d + 2\}$ be the smallest integer such that 
$b_j \neq b'_j$, and assume w.l.o.g. that $b_j < b'_j$. Then choosing the 
entries 
of $P$ such that 
\[
\tau_{b'_{2d -1}} + \a_{b'_{2 d}} \tau_{b'_{2 d}} + \a_{b'_{2 d+1}} 
\tau_{b'_{2d+1}}
+ \a_{b'_{2 d+2}} \tau_{b'_{2d+2}} \neq 0 \textrm{ and } 
\a_{b_{j}} \tau_{b_{j}} = - \sum_{\substack{i = 2d - 1 \\ i\neq j}}^{2d+2} 
\a_{b_{i}} \tau_{b_{i}}
\]  
(here we are taking $\a_{b_{2 d-1}} := 1$) we get that 
$P \in \Supp(\varphi(g + I_X))$ and 
$P \notin \Supp(\varphi(f + I_X))$. So we assume further that 
$(b_{2d - 1}, b_{2d}, b_{2d + 1}, b_{2d + 2}) = (b'_{2d - 1}, 
b'_{2d}, b'_{2d + 1}, b'_{2d + 2})$ and suppose that 
$(\a_{2d}, \a_{2d + 1}, \a_{2d + 2}) \neq (\a'_{2d}, \a'_{2d + 1}, \a'_{2d + 
2})$. Let $j$ be the least integer among $2d, 2d+1, 2d+2$ such that 
$\a_j \neq \a'_j$. Then we may choose $P = (\tau_1, \ldots, \tau_s)$  
such that $\prod_{i = 1}^{d - 1}( \tau_{b_i} + \a_{c_i} \tau_{c_i}) \neq 0$,
and such that 
$\tau_{b_{2d -1}} + \a'_{b_{2 d}} \tau_{b_{2 d}} + \a'_{b_{2 d+1}} 
\tau_{b_{2d+1}} + \a'_{b_{2 d+2}} \tau_{b_{2d+2}} \neq 0$ 
and 
$\tau_{b_{2d -1}} + \a_{b_{2 d}} \tau_{b_{2 d}} + \a_{b_{2 d+1}} 
\tau_{b_{2d+1}} + \a_{b_{2 d+2}} \tau_{b_{2d+2}} = 0$ so that 
$P \in \Supp(\varphi(g + I_X))$ and 
$P \notin \Supp(\varphi(f + I_X))$. This shows that the  5-tuple 
described above characterizes uniquely the monic polynomials $f$ that can be 
written as in the statement of Theorem \ref{second-min-word-general}, and if 
$f$ and $g$ are distinct such polynomials, then $\varphi(f + I_X) \neq 
\varphi(g + I_X)$. We will count the number of these polynomials.
Similarly as in the proof of Theorem \ref{num-min-words}, the number of 
distinct sequences $((b_1, c_1), \ldots, (b_{d-1}, c_{d-1}))$, 
with $b_i$ and $c_i$ as in the statement of Theorem 
\ref{second-min-word-general}  for all $i = 1, \ldots, d - 1$,  is 
\[
\frac{1}{(d - 1)!} \, \prod_{k = 0}^{d - 2} \binom{s - 2 k }{2} = 
 \frac{s(s-1) . \cdots . (s - 2d + 4)(s - 2d + 3)}{(d - 1)! \; 2^{d - 1}}. 
\]
The number of possibilities for the $d-1$-tuple $(\a_1, \ldots, \a_{d -1})$ is
$(q-1)^{d - 1}$. The number of possibilities for the 4-tuple
$(b_{2d - 1}, b_{2d}, b_{2d + 1}, b_{2d + 2})$ is 
$\binom{s - 2 d + 2}{4}$ and for the triple $(\a_{b_{2 d}}, \a_{b_{2 d+1}}, 
\a_{b_{2 d + 2}})$ is $(q - 1)^3$. 
Thus the total number of monic polynomials $f$ that can be 
written as in the statement of Theorem \ref{second-min-word-general} is
\begin{equation*} 
\begin{split}
 &\frac{s(s-1) . \cdots . (s - 2d + 4)(s - 2d + 3)}{(d - 1)! \; 2^{d - 1}}. 
(q - 1)^{d - 1} \\ &. \frac{(s - 2d + 2)(s - 2d + 1 )(s - 2d)(s - 2d - 1)}{24} 
. (q - 1)^3
\end{split}
\end{equation*}
and this number, multiplied by $q -1$, is the number of next-to-minimal 
codewords.
\end{proof}

When $2d - 2 \geq s$ we have $2(s - d) + 2 \leq s$, and there is an explicitly 
described isomorphism between $\C(s - d)$ and $\C(d)$, as explained in the 
proof of \cite[Thm. 4.5]{evalcodes}. From this isomorphism, we may deduce the 
following consequence of the above theorem.

\begin{corollary} \label{num-ntm-codew2}   
Let $f \in \fqt$ be a homogeneous monomially square-free   
polynomial of degree $d$, and assume 
that 
$2 
d - 2\geq s$.  Then 
$\varphi(f + I_X)$ is a next-to-minimal weight codeword of $\C(d)$ if and only 
if $f$ 
may 
be (uniquely) written as 
\begin{equation*} 
\begin{split}
f &= \a \Biggl( \prod_{i = 1}^{s - d - 1} (t_{ b_i}  + \a_{i} 
t_{c_i}) \Biggr)
(t_{ b_{2s - 2d }} t_{ b_{2s - 2d +1 }} t_{ b_{2s - 2d + 2}}\\  &+ 
\a_{ b_{2 s - 2 d}} t_{ b_{2 s - 2 d - 1}} t_{ b_{2s - 2d +1 }} t_{ b_{2s - 2d 
+ 
2}} + \a_{ b_{2 s - 2 d+1}} t_{ b_{2 s - 2 d - 1}} t_{ b_{2 s - 2d }}  
t_{ b_{2s - 2d + 2}} \\  & 
+ \a_{ b_{2s - 2 d+2}} t_{ b_{2 s - 2 d - 1}} t_{ b_{2 s - 2d }}  
t_{ b_{2s - 2d + 1}}  )
\prod_{\stackrel{j = 1}{j \notin B_f }}^s t_j
\end{split}
\end{equation*} 
with $\a, \a_{1}, \ldots, \a_{s - d - 1},   
 \a_{b_{2s - 2d }}, \a_{b_{2s - 2 d +1}}, \a_{b_{2s - 2 d + 2}} 
\in \fq^*$,  $b_1,\ldots, b_{s - d-1}$, $c_1, \ldots$, $c_{s - d-1}$, $b_{2s - 
2d 
- 1}$, $b_{2s - 2d}$, $b_{2s - 2d + 1}$ and  
$b_{2s - 2d + 2}$  
 are $2s - 2 d + 2 $ distinct elements of $\{1,\ldots, s\}$, 
$b_i < c_i$ for all $i = 1, \ldots, s - d - 1$,
$b_1 < \cdots < b_{s - d - 1}$,  $b_{2s - 2d - 1} < b_{2s - 2d} < b_{2s - 2d + 
1} < b_{2s - 2d + 2}$, and 
\[
B_f = \{b_1,\ldots, b_{s - d-1}, b_{2s - 2d - 1}, b_{2s - 2d}, b_{2s - 2d + 1}, 
b_{2s - 2d + 2}, c_1, 
\ldots, c_{s - d-1}\}.
\]

The number of next-to-minimal weight codewords of $\C(d)$, in the case where $2 
d - 2\geq s$ is 
\[
\frac{(q-1)^{s - d +3}\prod_{i = 0}^{2s - 2d + 1} (s - i) }{(s - d - 1)! \; 
2^{s - d + 2} . 3}.
\] 
\end{corollary}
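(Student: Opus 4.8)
The plan is to derive Corollary~\ref{num-ntm-codew2} from Theorem~\ref{num-ntm-codew} and Theorem~\ref{second-min-word-general} exactly as Corollary~\ref{corol.2.4} was derived from the $2d\le s$ results, using the monomial equivalence between $\C(d)$ and $\C(s-d)$ recorded in Equation~\eqref{f-fc}. First I would note that the hypothesis $2d-2\ge s$ is equivalent to $2(s-d)+2\le s$, so Theorem~\ref{second-min-word-general} and Theorem~\ref{num-ntm-codew} apply to $\C(s-d)$ with the role of $d$ played by $s-d$. The key algebraic tool is the complement bijection $f\mapsto f^c$ between homogeneous monomially square-free polynomials of degree $r$ and those of degree $s-r$, already used in the proof of Corollary~\ref{corol.2.4}, together with the identity from that proof,
\[
\left( (t_{u_1} + \gamma_1 t_{v_1}). \cdots . (t_{u_r} + \gamma_r t_{v_r}) \right)^c = (t_{v_1} + \gamma_1 t_{u_1}). \cdots . (t_{v_r} + \gamma_r t_{u_r}) \prod_{\stackrel{j = 1}{j \notin A_r}}^s t_j,
\]
which handles the product of linear factors.

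Second, I would establish the complement of the four-term factor appearing in Theorem~\ref{second-min-word-general}. Writing $h := t_{b_{2d-1}} + \a_{b_{2d}} t_{b_{2d}} + \a_{b_{2d+1}} t_{b_{2d+1}} + \a_{b_{2d+2}} t_{b_{2d+2}}$ and using the linearity of $M\mapsto M^c$ together with the definition $ (t_1^{a_1}\cdots t_s^{a_s})^c = t_1^{1-a_1}\cdots t_s^{1-a_s}$, a direct term-by-term computation (as in the inductive step of Corollary~\ref{corol.2.4}) shows that, modulo the monomial $\prod_{j\notin\{b_{2d-1},b_{2d},b_{2d+1},b_{2d+2}\}} t_j$, the polynomial $h^c$ is exactly the three-variables-missing-one expression
\[
t_{b_{2d}} t_{b_{2d+1}} t_{b_{2d+2}} + \a_{b_{2d}} t_{b_{2d-1}} t_{b_{2d+1}} t_{b_{2d+2}} + \a_{b_{2d+1}} t_{b_{2d-1}} t_{b_{2d}} t_{b_{2d+2}} + \a_{b_{2d+2}} t_{b_{2d-1}} t_{b_{2d}} t_{b_{2d+1}},
\]
which is precisely the shape of the four-term factor in the statement of the Corollary (after relabeling the indices to the $2s-2d\pm$ ranges). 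Combining this with the product identity above, applied to the $s-d-1$ linear factors, and then multiplying by the extra monomial coming from the complement of all the variables not in $B_f$, yields the claimed form of $f$. The normalizations $b_i<c_i$, $b_1<\cdots<b_{s-d-1}$, and $b_{2s-2d-1}<b_{2s-2d}<b_{2s-2d+1}<b_{2s-2d+2}$ are imposed just as at the end of the proof of Theorem~\ref{min-word-general}, and uniqueness follows the same way, giving the ``if and only if'' characterization.

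Third, for the count, I would invoke the isomorphism between $\C(d)$ and $\C(s-d)$ (which is a bijection on codewords preserving weights), so that the number of next-to-minimal weight codewords of $\C(d)$ equals that of $\C(s-d)$; then substitute $s-d$ for $d$ in the formula of Theorem~\ref{num-ntm-codew}, giving
\[
\frac{(q-1)^{(s-d)+3}\prod_{i=0}^{2(s-d)+1}(s-i)}{((s-d)-1)!\;2^{(s-d)+2}\cdot 3} = \frac{(q-1)^{s-d+3}\prod_{i=0}^{2s-2d+1}(s-i)}{(s-d-1)!\;2^{s-d+2}\cdot 3}.
\]
I expect the main obstacle to be purely bookkeeping: carefully verifying the complement identity for the four-term factor with the correct bijection between the index set $\{b_{2d-1},\dots,b_{2d+2}\}$ of Theorem~\ref{second-min-word-general} and the index set $\{b_{2s-2d-1},\dots,b_{2s-2d+2}\}$ of the Corollary, and checking that $2d-2\ge s$ does indeed translate into $2(s-d)+2\le s$ with room to spare (i.e.\ that the hypotheses of the cited theorems are genuinely met, including $q\ge 4$). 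No genuinely new idea is needed beyond what the proof of Corollary~\ref{corol.2.4} already contains.
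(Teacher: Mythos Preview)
Your proposal is correct and follows essentially the same route as the paper: reduce to $\C(s-d)$ via the monomial equivalence \eqref{f-fc}, apply Theorem~\ref{second-min-word-general} and Theorem~\ref{num-ntm-codew} with $s-d$ in place of $d$, compute $f^c$ using the product identity from the proof of Corollary~\ref{corol.2.4} (the paper applies it to each summand $\big(\prod(t_{u_i}+\a_i t_{v_i})\big)\gamma t_j$ rather than complementing the four-term factor first, but this is the same calculation), and then normalize. No step is missing.
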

\begin{proof}
We know that $\C(d)$  is monomially  equivalent to $\C(s -d)$, and from $2d - 2 
\geq s$ we get $2(s - d) + 2 \leq s$. Thus, from Theorem 
\ref{second-min-word-general} we know that the next-to-minimal weight codeword 
of
$\C(s -d)$ is obtained from the evaluation of a polynomial of the form
\begin{equation*} 
\begin{split}
f = \a  \left(\prod_{i = 1}^{s - d - 1} (t_{u_i} \right. &+ \a_{i} 
t_{v_i}) \Biggr)
(t_{u_{2s - 2d  -1}}\\  &+ \a_{u_{2 s - 2 d}} t_{u_{2 s - 2 d}} + \a_{u_{2 s - 
2 
d+1}} t_{u_{2 s - 2d + 1}}
+ \a_{u_{2s - 2 d+2}} t_{u_{2s - 2d+2}}), 
\end{split}
\end{equation*}
where with $\a, 
\a_{1}, \ldots, \a_{s - d-1}, \a_{ u_{2 s - 2 d}}, \a_{ u_{2 s - 2 d +1}}, 
\a_{ u_{2 s - 2 d + 2}}  
\in \fq^*$,  $ u_1,\ldots,  u_{s - d-1}$, $ v_1, \ldots,  v_{s - d-1}$, $ u_{2s 
- 2d 
- 1}$, $ u_{2s - 2d}$, 
$ u_{2s - 2d + 1}$, $ u_{2s - 2d + 2}$  
 are $2s - 2 d + 2 $ distinct elements of 
$\{1,\ldots, 
s\}$, $ u_i < 
v_i$ for all $i = 1, \ldots, s - d - 1$ and
$ u_1 < \cdots <  u_{s - d -1}$, $  u_{2s - 2d - 1} <  u_{2s - 2d} <  u_{2s - 
2d + 
1} < 
 u_{2s - 2d + 2}$.
From Equation \eqref{f-fc} we know that the next-to-minimal weight codewords of 
$\C(d)$ must be obtained by the evaluation of polynomials $f^c$, with $f$ as 
above. 

Let $A_{s - d - 1} := \{u_1, \ldots, u_{s - d - 1}, v_1, \ldots, v_{s - d - 1} 
\}$.
From the proof of Corollary \ref{corol.2.4} we know that 
if $j \in \{1, \ldots, s\} \setminus A_{s - d - 1}$ and $\gamma \in \fq^*$ then 
\[
\Biggl( \;   \left(\prod_{i = 1}^{s - d - 1} (t_{u_i} + \a_{i} 
t_{v_i}) \right) \gamma t_j \; \Biggr)^c = \left(\prod_{i = 1}^{s - d - 1} 
(t_{v_i} + 
\a_{i} 
t_{u_i}) \right) \gamma \prod_{\stackrel{\ell = 1}{\ell \notin A_{s - d - 
1}\cup\{ j 
\} }}^s t_\ell
\]
so
\begin{equation*} 
\begin{split}
f^c &= \a \Biggl( \prod_{i = 1}^{s - d - 1} (t_{v_i}  + \a_{i} 
t_{u_i}) \Biggr)
(t_{u_{2s - 2d }} t_{u_{2s - 2d +1 }} t_{u_{2s - 2d + 2}}\\  &+ 
\a_{u_{2 s - 2 d}} t_{u_{2 s - 2 d - 1}} t_{u_{2s - 2d +1 }} t_{u_{2s - 2d + 
2}} + \a_{u_{2 s - 2 d+1}} t_{u_{2 s - 2 d - 1}} t_{u_{2 s - 2d }}  
t_{u_{2s - 2d + 2}} \\  & 
+ \a_{u_{2s - 2 d+2}} t_{u_{2 s - 2 d - 1}} t_{u_{2 s - 2d }}  
t_{u_{2s - 2d + 1}}  )
\prod_{\stackrel{\ell = 1}{\ell \notin A_{s - d - 
1}\cup\{u_{2s - 2d - 1}, u_{2s - 2d}, u_{2s - 2d+1}, u_{2s - 2d+2}  \} }}^s 
t_\ell
\end{split}
\end{equation*} 
One may write the polynomials of the above form in a unique way, as in the 
statement of this Corollary.

The number of next-to-minimal weight codewords, in the case where $2d - 2 \geq 
s$ may 
be obtained from Theorem \ref{num-ntm-codew}, replacing $d$ by $s - d$.
\end{proof}

\begin{example} \label{last-ex}
Using a software like Magma (v.\ \cite{magma}) one may check that 
$t_1t_2t_3 + \alpha_1 t_1t_2t_4 + \alpha_2 t_1t_3t_4 + \alpha_3 
t_2t_3t_4$ is irreducible in $\fq[t_1, t_2, t_3, t_4]$ for all $\alpha_1, 
\alpha_2, \alpha_3 \in \fq^*$, and all $\fq$ such that $4 \leq q \leq 49$. 
This shows that, in the case $2d - 2 \geq s$, when one considers
the intersection of the affine torus and hypersurfaces
of degree $d$ in $\mathbb{F}_q^s$, given by a homogeneous polynomial in 
$\fqt$ whose monomials are square-free, 
the second maximal number of points may be attained by hypersurfaces which 
are not a hyperplane arrangement.
\end{example}

\noindent 
{\small
\textbf{Acknowledgments.} 

\noindent
C. Carvalho  was partially supported by Fapemig APQ-01430-24 and CNPq PQ 
308708/2023-7 \\
N. Patanker was partially supported by an IoE-IISc Postdoctoral Fellowship.}

\end{document}